\newtheorem{theorem}{Theorem}
\newtheorem{proposition}{Proposition}
\theoremstyle{definition}
\newtheorem{definition}{Definition}[section]
\newtheorem{lemma}{Lemma}
\newtheorem{example}{Example}
\DeclareMathOperator*{\argmin}{argmin}
\DeclareMathOperator*{\argmax}{argmax}
\begin{document}

\title{Epidemic Source Detection in Contact Tracing Networks: Epidemic Centrality in Graphs and Message-Passing Algorithms}

\author{
  \IEEEauthorblockN{Pei-Duo Yu$^\ddag$, Chee~Wei~Tan$^\dag$ and Hung-Lin Fu$^\ast$\\}
  \IEEEauthorblockA{Chung Yuan Christian University$^\ddag$, City University of Hong Kong$^\dag$, National Yang Ming Chiao Tung University$^\ast$ \\
 peiduoyu@cycu.edu.tw, cheewtan@gmail.com, hlfu@math.nctu.edu.tw }
\thanks{This work is supported in part by the Ministry of Science and Technology of Taiwan under Grant 110-2115-M-033-001-MY2, Hong Kong ITF Project ITS/188/20 and an Institute for Pure and Applied Mathematics fellowship.}
}

\maketitle

\begin{abstract}
We study the epidemic source detection problem in contact tracing networks modeled as a graph-constrained maximum likelihood estimation problem using the susceptible-infected model in epidemiology. Based on a snapshot observation of the infection subgraph, we first study finite degree regular graphs and regular graphs with cycles separately, thereby establishing a mathematical equivalence in maximal likelihood ratio between the case of finite acyclic graphs and that of cyclic graphs. In particular, we show that the optimal solution of the maximum likelihood estimator can be refined to distances on graphs based on a novel statistical distance centrality that captures the optimality of the nonconvex problem. An efficient contact tracing algorithm is then proposed to solve the general case of finite degree-regular graphs with multiple cycles. Our performance evaluation on a variety of graphs shows that our algorithms outperform the existing state-of-the-art heuristics using contact tracing data from the SARS-CoV 2003 and COVID-19 pandemics by correctly identifying the superspreaders on some of the largest superspreading infection clusters in Singapore and Taiwan.
\end{abstract}


\section{Introduction}\label{introduction}

The COVID-19 coronavirus pandemic has revealed severe deficiencies in public health protection \cite{fei2021overview}. As the COVID-19 disease is highly contagious and wide-ranging with long incubation periods (transmission rate of 3-5 persons within 6 feet), it becomes necessary to track down all the infected persons and their recent contacts once an outbreak has occurred. It is often necessary to account for the initial source of the outbreak (e.g., identification of superspreaders \cite{kemper1980identification,stein2010}) so that public health is resilient against further outbreaks or to understand the underlying cause of secondary transmissions. Public health authorities worldwide employ contact tracing to address these problems \cite{ct1,ct2,plosone,naturephysics2021,cencetti2021digital,bradshaw2021bidirectional,bengio2020predicting,fei2021overview}. 

In general, manual contact tracing is a complex and tedious process: Once a person has been diagnosed as infected, public health authorities fan out to trace the recent contacts of this person for the purpose of monitoring or quarantine. This process repeats if one of those contacts exhibits symptoms until all the contacts who have been exposed are out of circulation. The COVID-19 coronavirus epidemic has overwhelmed most contact tracing capabilities due to the speed and scale of infection \cite{naturephysics2021}. For contact tracing to be efficient, especially to identify superspreaders in recurrent large-scale outbreaks during the COVID-19 pandemic, efficient and scalable contact tracing algorithms need to be developed \cite{ct1,ct2,plosone,naturephysics2021,cencetti2021digital,bradshaw2021bidirectional,bengio2020predicting,fei2021overview}.

Recently, there are public health protection schemes that employ mobile software technologies to use wireless signals (e.g., Bluetooth) to collect data on social contact connectivity for {\it digital contact tracing} in epidemiology \cite{ct1,ct3,bengio2020predicting}. These data lead to contact tracing networks that are essentially large graphs generated by stochastic processes. From an epidemiology perspective, all the infected persons in a contact tracing graph are potential candidates for tracking purposes as well as identification of Patient Zero or superspreaders \cite{kemper1980identification,stein2010}. A fundamental question in digital contact tracing is {\it how to unravel stochastic spreading processes to find the initial outbreak source quickly, accurately and reliably with high confidence by exploiting the topological and statistical properties of contact tracing networks.} 

The spreading of epidemics and rumors share much in common as stochastic processes in mathematical epidemiology \cite{Bailey_sir,epidemicsrumors}. Formulating the source detection as solving a maximum likelihood estimation problem was first studied in \cite{who} using a network centrality called the \textit{rumor centrality} to optimally solve a special case of degree-regular tree graph with countably infinite number of vertices and assuming a $\mathsf{SI}$ (susceptible-infectious) spreading model. There were a number of problem extensions subsequently, e.g., random increasing trees in \cite{fuch}, probabilistic sampling in \cite{kar1}, star graph topology in \cite{sri_star}, multiple sources or observations in \cite{tay, BAPrak,rumor_jstsp_2015}, Markov chain Monte Carlo based algorithms in \cite{kar2} and probabilistic characterization of infection network boundaries in \cite{LZhengDSP}. Source estimation for other spreading models include the Susceptible-Infected-Recovered model in \cite{ZhuLei} and a random branching process for irregular trees in \cite{peter1,fuch}. Related work on proactive network protection against epidemics include \cite{rumor_jstsp_2018,pinto, paluch}, e.g., a maximum likelihood algorithm in \cite{paluch} that complements the work in \cite{pinto}. How to optimally place observations concerning the efficiency and the cost and design message-passing algorithms were studied in \cite{rumor_jstsp_2018}.  Probabilistic inference methods that use machine learning techniques like graph neural networks and deep learning for epidemiology have been studied in \cite{murphy2021deep,siya2021CISS, bengio2020predicting,fei2021overview}. 

To the best of our knowledge, there is no prior work that solves the problem in \cite{who} when the network has cycles or finite graph boundaries even under the $\mathsf{SI}$ spreading model. This more general problem is computationally challenging due to the presence of \textit{irregular vertices} (vertices without susceptible neighbors) and cycles. Irregular vertices are practical for modeling actual graph connectivity or users who are quarantined after being infected. In addition, the presence of cycles cannot be ignored. In essence, unlike the tree graphs, the cycles allow the stochastic spreading process to spread through multiple alternate paths, thus increasing the likelihood of those vertices on the cycle. As such, the presence of cycles and irregular vertices introduce nontrivial irregular effects that significantly shape the way that virus spread as well as the performance of detecting the original source on the infected graph. To be exact, existing algorithms in the literature, e.g., \cite{shah2,who,tan1,tan2}, are no longer optimal even with the presence of a single cycle in degree-regular pseudo-tree graphs.

In this paper, we address the outstanding issues of epidemic source detection for the general cases when there are irregular vertices and cycles. In particular, we demonstrate that the likelihood ratios between vertices of the infection graph provide insights to locating the most likely source, which leads to a new network centrality connecting graph-theoretic structures with estimation performance and enables low-complexity algorithms using the same spreading model as those studied in the literature \cite{shah2,who,loh2, tan1, kar1}. 

\begin{table*}
\caption{We summarize the main results of this paper with a comparison with prior art in \cite{who} in this table. Assume the underlying network $G$ is a degree-regular tree. We denote the irregular vertex as $v_{ir}$ and the epidemic center as $v_c$. Note that, the case that $G_n$ containing an irregular vertex can be treated mathematically as a special case of $G_n$ containing a single cycle, since an irregular vertex can be treated as a size one cycle which is on the boundary of the graph. Therefore, the logical flow of this paper is such that we first consider the irregular effect caused by an irregular vertex and then analysis to graphs with cycles.}
\label{tab:result_conclusion}
\centering
\begin{tabular}{c|c|c|c}
Underlying model assumption & Infinite-size $G$  & Finite-size $G$ & $G$ as Pseudo-Tree (cf. Definition \ref{def:pseudo_tree}) \\
\hline
An irregular vertex in $G_n$ & No & Yes & No \\
\hline
A cycle in $G_n$ & No & No & Yes \\
\hline
 MLE & $v_c$ & On the path from $v_c$ to $v_{ir}$ &  On the path from $v_c$ to $C_h$ \\
\hline
Key theorems in this paper & -- & Theorems \ref{thm:longterm}, \ref{thm:keyfinite}& Theorems \ref{thm:rough_locate_cycle}, \ref{thm:MLE_cycle}
\end{tabular}
\end{table*}

\subsection{Our Contributions}
The main contributions are summarized as follows:
\begin{itemize}
    \item  We consider a general network (i.e., finite large network with cycles) for epidemic source detection, and analytically characterize how graph distances between vertices in an infection graph and a single cycle or irregular vertex affect the likelihood of each vertex being the source. 
    
    \item For a finite size degree-regular tree, we characterize the globally optimal maximum likelihood estimator based on graph distances and propose a scalable message-passing algorithm to compute this solution in linear time.
    
    \item For a degree-regular graph with cycles, we prove that its epidemic center can be equal to the epidemic center of any of its spanning trees. We propose a polynomial-time algorithm to find the epidemic center of a graph with a single cycle by characterizing the optimal maximum likelihood estimator based on the graph distances.

    \item We combine the results in finite-size degree regular graphs and unicyclic graphs and propose a polynomial-time distance-based algorithm to find the source estimator. For synthetic data, we conduct simulations on two finite-size regular graphs with cycles which are grid graphs and circulant graphs. We evaluate the performance by comparing the results with the $\mathsf{BFS}$ heuristic rumor centrality \cite{who}, showing that the error (number of hops) of our source estimator is at least $50\%$ smaller than $\mathsf{BFS}$ rumor center. Using real-world contact tracing data (due to SARS-CoV 2003 and COVID-19 pandemic), we apply our algorithm on some of the largest superspreading infection clusters in Singapore and Taiwan, correctly identifying the superspreader in this cluster.
\end{itemize}

\section{Preliminaries on Virus Spreading Model}
\label{infinite}
We model a contact network by an undirected graph $G=(V,E)$, where the set of vertices $V$ represents the vertices in the underlying network, and the set of edges $E$ represents the links between the  vertices. We shall assume that $V$ is {\it countably finite} (this is the crucial departing point from the previous assumption of infinite graph in the literature \cite{shah2,who,tay,tan1,tan2}). In this paper, we use the Susceptible-Infectious (SI) model in \cite{Bailey_sir,epidemicsrumors,BB} to model virus spreading. Vertices that infected by the virus are called {\it infected vertices} and otherwise they are  {\it susceptible vertices}.  The spreading is initiated by a single vertex $v^{\star}\in V$ that we call the source. Once a vertex is infected, it stays infected and can in turn infect its susceptible neighbors. The virus can be spread from vertex $i$ to vertex $j$ if and only if there is an edge between them (i.e., $(i,j)\in E$). Let $\tau_{ij}$ be the spreading time from $i$ to $j$, which are random variables that are independently and exponentially distributed with parameter $\lambda$ (without loss of generality, let $\lambda=1$). Let $\mathbf{S}$ denote the set of all susceptible vertices that have at least one infected neighbor, i.e., those vertices in $\mathbf{S}$ might be infected in the near future. In the real world there are some people that are more likely to be infected by the virus, and some are more likely to spread the virus to others. We can assume that each person has two parameters say $R_{\mathsf{i}}$ and $R_{\mathsf{s}}$ which are corresponding to the rate of being infected and the rate of spreading the virus to others respectively. Due to the memoryless property of the exponential distribution. We have the fact that each newly infected vertex $v$ is randomly chosen from $\mathbf{S}$ with the probability that 
\begin{center}
    $ P(v \text{ is infected}) \propto R_{\mathsf{i}}^v\cdot \sum\limits_{u} R_{\mathsf{s}}^u$,
\end{center}
where each $u$ is an infected neighbor of $v$, $R_{\mathsf{i}}^v$ denote the infected rate of $v$ and $R_{\mathsf{s}}^u$ denote the spreading rate of $u$. Hence, the probability of a vertex $v_a$ being infected in the next time period is defined as
\begin{equation}
\label{eq:spread_p}
    P(v_a \text{ is infected})=\frac{R^{v_a}_{\mathsf{i}}\cdot \sum\limits_{u_a} R_{\mathsf{s}}^{u_a}}{\sum\limits_{v\in \mathbf{S}}[R^{v}_{\mathsf{i}}\cdot \sum\limits_{u} R_{\mathsf{s}}^u ]},
\end{equation}
where $u_a$ and $u$ represents each infected neighbor of $v_a$ and $v$ respectively.

Now, we have a random spreading model over an underlying {\it finite graph} $G$. We can view $G$ as a {\it contact network} \cite{Salathe22020,eletreby2020effects,harvardreview,chen2020time} where nodes represent individuals and edges represent social contacts such as two individuals have been in the same place or in close contact (within about $6$ feet). Contact tracing is a process to identify the people who have been in contact with an infected individual \cite{bengio2020predicting}. Hence, an epidemic contact tracing network can be seen as a connected subgraph of a human contact network containing infected people.  Let $G_n$ be a subgraph of order $n$ of $G$, that models a snapshot observation of the spreading when there are $n$ infected vertices, i.e., $G_n$ is a contact tracing network in $G$ and $\vert G_n\vert =n$. In the following, we shall call $G_n$ an infected subgraph of $G$ for brevity. We denote the {\it actual source} in $G_n$ as $v^{\star}$. The epidemic source detection problem is thus to find $v^{\star}$ given this observation of $G_n$. For example, in Fig. \ref{fig:irregular_ex} the infected subgraph are those vertices labeled from one $1$ to $6$, i.e., $V(G_n)=\{v_1, v_2, \ldots, v_6\},$ and the underlying network $G$ is the whole graph including those dotted-line vertices and edges. Since we assume that the graph topology of $G_n$ is the only given information, i.e., two parameters $R_{\mathsf{i}}$ and $R_{\mathsf{s}}$ are unknown for all vertices in $G_n$. Hence, we shall assume that $R_{\mathsf{i}}=R_{\mathsf{s}}=1$ for all vertices and we have a simplified infection probability for vertex $v_a$

\begin{equation}
\label{eq:simp_spread_p}
    P(v_a \text{ is infected})=\frac{\sum\limits_{u_a} 1}{\sum\limits_{v\in \mathbf{S}}[\sum\limits_{u} 1 ]},
\end{equation}
which implies that the probability of $v_a$ being infected is proportional to the number of its infected neighbors. Note that, when $G$ is a tree network, this spreading model is equivalent to the one considered in \cite{who}. However, the probability defined in (\ref{eq:spread_p}) can be used to analyze more general cases such as graphs with cycles. In this paper, we aim to solve the epidemic source detection problem which is defined as follows:

\begin{equation}
\label{main_problem}
\begin{aligned}
& \underset{v\in G_n}{\text{maximize}} 
& & P(G_n\vert{v})\\
& \text{subject to}
& & G_n \subset G,
\end{aligned}
\end{equation}
where $P(G_n\vert{v})$ is the likelihood function assuming $v$ is the source, and $G$ is an almost $d$-regular graph with some irregular vertices, i.e., vertices with degree not equal to $d$. The maximum likelihood estimator for the epidemic source is the vertex $v$ with the maximum $P(G_n\vert{v})$ \cite{who}. In this paper, we assume that all irregular vertices have degree less than $d$.

\begin{definition}
For a given infection subgraph $G_n$ over the underlying graph $G$, $\hat{v}$ is an maximum likelihood estimator for the epidemic source in $G_n$, i.e., $P(G_n\vert \hat{v})=\max\limits_{v_i\in G_n}P(G_n\vert v_i).$
\end{definition}

In the remaining part of this section, we review the maximum likelihood estimation problem of this epidemic source in the simplest case: regular-tree networks. By Bayes' theorem, $P(G_n\vert{v})$ is the probability that $v$ is the {\it actual epidemic infection source} that leads to observing $G_n$. Now, let $\sigma_i$ be the possible spreading order starting from $v$, and let $M(v,G_n)$ be the collection of all $\sigma_i$ when  $v$ is the source in $G_n$. Then, we have
\begin{equation}
\label{pgv}
P(G_n\vert{v})=\displaystyle\sum_{\sigma_i \in {M(v,G_n)}}{P(\sigma_i\vert{v})}.
\end{equation}
In particular, for a $d$-regular tree, we have  \cite{who}:
\begin{equation}
\label{psigma}
P(\sigma_i \vert{v})=\displaystyle\prod_{k=1}^{n-1} \frac{1}{dk-2(k-1)}.
\end{equation}

Now, if the spreading has not reached the irregular vertices, then $P(\sigma_i \vert{v})=P(\sigma_j \vert{v})$ for all
$\sigma_i,\sigma_j \in M(v,G_n)$. By combining (\ref{pgv}) and (\ref{psigma}), we have
\begin{align*}
P(G_n\vert{v})&=\sum_{\sigma_i \in {M(v,G_n)}}{P(\sigma_i\vert{v})}
\\
&={\vert M(v,G_n) \vert}\cdot{P(\sigma\vert{v})}
\indent  \forall\sigma_i \in M(v,G_n)
\\
&={\vert M(v,G_n) \vert}\cdot\displaystyle\prod_{k=1}^{n-1} \frac{1}{dk-2(k-1)},
\end{align*}
which means that $P(G_n\vert v)$ is proportional to  $\vert M(v,G_n) \vert $. If we treat the rooted tree as a partially ordered set, then a spreading order is a linear extension corresponding to this poset. Hence, the quantity $\vert M(v,G_n) \vert$ is actually the number all linear extensions of the poset $G_n$ rooting at $v$. The authors in \cite{who} called $\vert M(v,G_n) \vert$  {\it rumor centrality}, which is crucial to solving the maximum likelihood estimation for degree-regular trees. We denote the vertex having the maximum $\vert M(v,G_n) \vert$ among all vertices in $G_n$ as $v_c$, that is

\begin{equation}
    v_c=\argmax\limits_{v\in G_n} \vert M(v,G_n)\vert .
\end{equation}

In particular, $v_c$ is called the {\it rumor center} when $G_n$ is a tree in \cite{who} and the authors in \cite{tan3,ma} established its equivalence to the graph distance center and tree centroid. When $G_n$ is a general graph, rumor center is only defined on the BFS spanning tree of $G_n$. To avoid confusion, in this paper we call the quantity $\vert M(v,G_n) \vert$ {\it epidemic centrality} and $v_c$ {\it epidemic center} for a general graph.

\begin{definition}
\label{def:distance}
For a graph $G$ and a vertex $v\in G$, the \textit{distance centrality} of $v$ can be defined as 
\begin{center}
    $distance~centrality$ of $v=\sum\limits_{u\in G}d(u,v)$,
\end{center}
where $d(u,v)$ is the shortest path distance, i.e., the number of edges along the shortest path from $u$ to $v$. The \textit{distance center} of a graph is the vertex with the smallest distance centrality. 
\end{definition}

\section{Trees with a Single degree-one Irregular Vertex}
\label{sec:finite}

\begin{figure*}
\begin{minipage}{0.85\linewidth}
\begin{minipage}{.55\textwidth}
\centering
\includegraphics[scale=0.25]{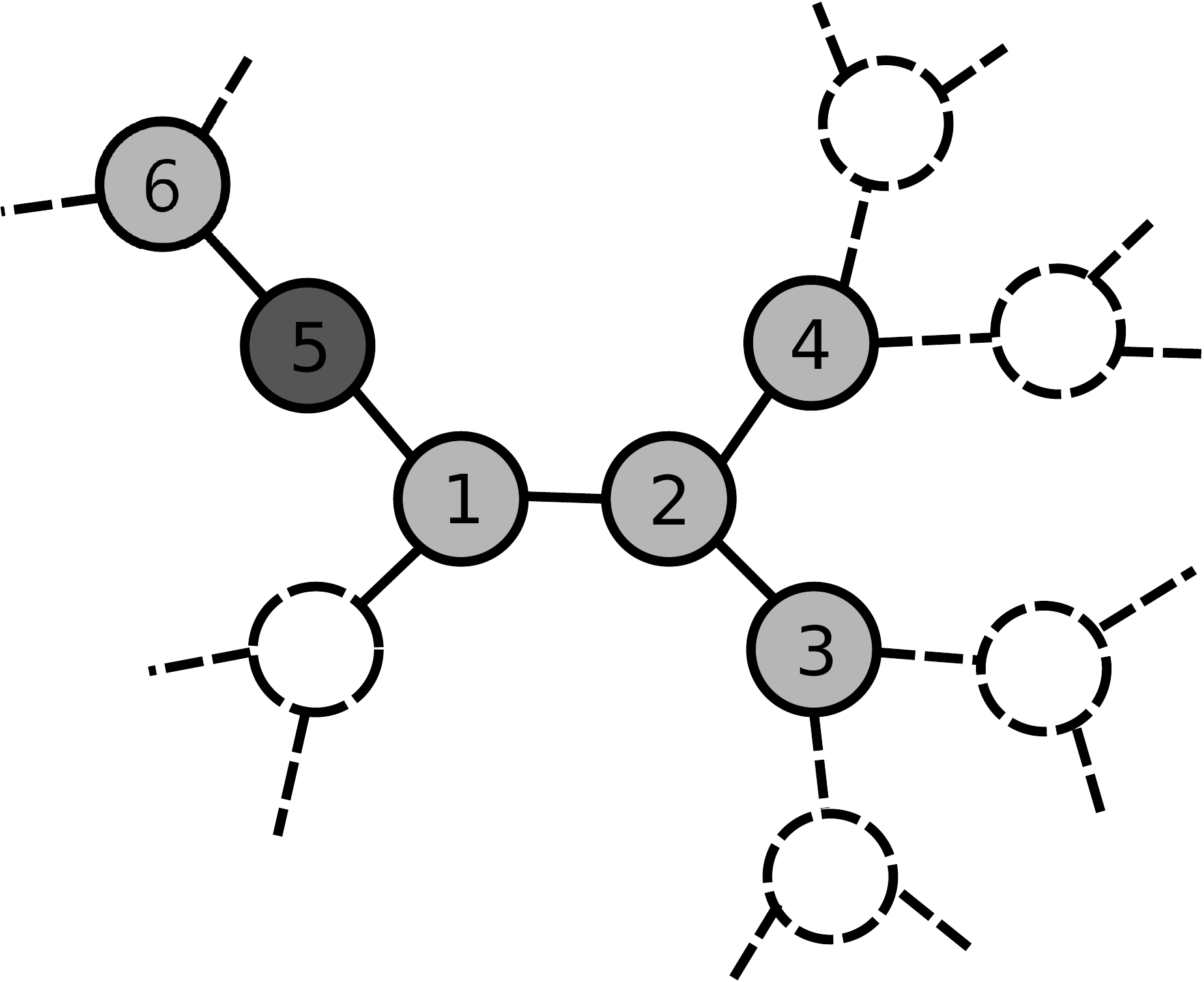}
\caption{Example of $G$ as a $3$-regular tree except $v_5$ and $G_n$ as a subtree with an irregular vertex $v_{ir}=v_5$. The maximum likelihood estimate $\hat{v}$ is $v_1$, moreover, the likelihood of $v_5$ is also greater than $v_2$. While a naive application of the rumor centrality in \cite{shah2}, i.e., the rumor center $v_c$ of $G_n$, yields both $v_1$ and $v_2$.}
\label{fig:irregular_ex}
\end{minipage}
\makeatletter\def\@captype{table}\makeatother
\begin{minipage}{.55\textwidth}
\caption{Numerical Example of $P(G_6\vert v_i)$ using $G_6$ in Fig. \ref{fig:irregular_ex}}
\renewcommand\arraystretch{1.2}
\label{tab:single_example}
\begin{tabular}{|c|c|c|c|c|c|c|} 
\cline{2-6}
\multicolumn{1}{c|}{}& \multicolumn{5}{c|}{$m^{v_5}_{v}(G_6,k)$} & \multicolumn{1}{|c}{}
\\
\hline
$v$ & $k=1$ & $k=2$ & $k=3$ & $k=4$ & $k=5$ & $P( G_6 \vert v)$ \\
\hline
$v_1$ & $0$ & $8$ & $6$ & $6$ & $0$ & 0.0149\\
\hline 
$v_5$ & $10$ & $0$ & $0$ & $0$ & $0$ & 0.0138 \\ 
\hline
$v_2$ & $0$ & $0$ & $6$ & $8$ & $6$ & 0.0114 \\
\hline 
$v_3$ & $0$ & $0$ & $0$ & $2$ & $2$ & 0.002 \\ 
\hline 
$v_4$ & $0$ & $0$ & $0$ & $2$ & $2$ & 0.002 \\ 
\hline 
$v_6$ & $0$ & $2$ & $0$ & $0$ & $0$ & 0.0018 \\ 
\hline 
\end{tabular}
\end{minipage}
\end{minipage}
\end{figure*}

In this section we study the effect on the maximum likelihood due to an {\it irregular vertex} in a regular tree network. Assuming that the original underlying network is a regular tree, this irregular vertex is a vertex with different number of neighborhoods compared to other vertices. For example, in a regular tree $G$ of bounded size, if the infected subgraph $G_n$ contains a leaf $v$ of $G$, then $v$ is an irregular vertex in $G$ since it has no other neighbors except its parent vertex. This models a person under quarantine in a contact tracing network with fewer neighborhoods than other people. 

We consider the case where the degree of the irregular vertex is less than that of all other vertices. Note that a degree-one irregular vertex is a leaf of both $G$ and $G_n$. Now, assume that the infection subgraph $G_n$ only has a single irregular vertex denoted as $v_{ir}$, where $deg(v_{ir})<d$ in $G$. Take Fig. \ref{fig:irregular_ex} for example, we have $v_{ir}=v_5$ since $deg(v_{ir})=2$ and other vertices are of degree $3$ in $G$. In the following, we study how $v_{ir}$ affects the maximum-likelihood estimation performance.

In particular, we compare this single irregular vertex special case with a naive prediction that assumes an underlying {\it infinite graph}. This illustrates that ignoring the irregular effect in the finite graph ultimately leads to a wrong estimate and thus requires an in-depth analysis and new epidemic source detection algorithm design for the general case of {\it finite graphs}.

\subsection{Impact of the Irregularity On $P(G_n | v)$}

\begin{example}
\label{exam:1}
Let $G$ be a infinite $3$-regular tree except $v_5$ has a different degree, and $G_6=$ is a subgraph of $G$ shown in Fig. \ref{fig:irregular_ex}. Consider $P(G_6 \vert v_1)$ and with a spreading order $\sigma: v_1\rightarrow v_2 \rightarrow v_5 \rightarrow v_3 \rightarrow v_4 \rightarrow v_6 $, we have $P(\sigma\vert v_1)=(1/3)\cdot (1/4)\cdot (1/4)\cdot (1/5)\cdot (1/6)$. Had $v_5$ not been the irregular vertex, then $P(\sigma\vert v_1)=(1/3)\cdot (1/4)\cdot (1/5)\cdot (1/6)\cdot (1/7)$. This demonstrates that the order at which the virus spreads to the irregular vertex $v_5$ is important when computing  $P(\sigma \vert v_1)$. In particular, $P(G_6\vert v_1)\approx 0.0149$. We also have $P(G_5\vert v_2)\approx0.0114$. Now, observe that $v_1$ and $v_2$ are two vertices with the largest epidemic centrality among all vertices in $G_6$, but $P(G_6\vert v_1)>P(G_6\vert v_2)$, and thus $\hat{v}=v$. Note that, the likelihood of $v_5$ being the source is also greater than that of $v_2$ even $v_2$ has a greater epidemic centrality.
\end{example}

Example \ref{exam:1} reveals some interesting properties of irregular effects due to even a single irregular vertex:  
\begin{itemize}
\item $P(\sigma_i \vert v)$ increases with how soon the  irregular vertex appears in $\sigma_i$ (as ordered from left to right of $\sigma_i$). 
\item When there is at least one irregular vertex in $G_n$, then $P(G_n\vert v)$ is no longer proportional to $\vert M(v,G_n)\vert$.
\end{itemize}

This means that $P(\sigma_i\vert v)$ is no longer a constant for each $i$, and depends on the position of the {\it irregular vertex} in each spreading order. We proceed to compute $P(\sigma_i\vert v)$ as follows. For brevity of notation, let $v_{ir}$ be the irregular vertex and let
\begin{align*}
M^{v_{ir}}_{v}(G_n,k)&= \lbrace \sigma\in M(v,G_n)\vert  v_{ir}\mbox{ is the }k\mbox{th} \mbox{ element of }\sigma\rbrace ;
\\
P^{v_{ir}}_{v}(G_n,k)&=P(\sigma \vert v),\mbox{ for }\sigma \in M^{v_{ir}}_{v}(G_n,k),
\end{align*}
where $M^{v_{ir}}_{v}(G_n,k)$ is the set of all the spreading orders  starting from $v$ and with $v_{ir}$ at the $k$th position, and its size is the combinatorial object of interest:
\begin{equation} 
\label{minimvgn}
m^{v_{ir}}_{v}(G_n,k)  = \vert M^{v_{ir}}_{v}(G_n,k)\vert.
\end{equation}
Let $D$ be the distance (in terms of number of hops) from $v$ to $v_{ir}$. Then we have 
\begin{equation}
\label{eq:mvgn}
\vert M(v,G_n)\vert=\sum\limits^{n-t^v_{v_{ir}}+1}_{k=D+1} m^{v_{ir}}_{v}(G_n,k).
\end{equation}
Now, (\ref{eq:mvgn}) shows that $M(v,G_n)$ can be decomposed into $M^{v_{ir}}_{v}(G_n,k)$ for $k=D+1,D+2,\dots ,n-t^v_{v_{ir}}+1$. This  decomposition allows us to handle the irregular effect due to the different position of the irregular vertex in each spreading order. For example, in Table \ref{tab:single_example} we list out all $m^{v_{ir}}_{v}(G_n,k)$ for each possible starting vertex $v$ and the position $k$ of $v_{ir}$. Let $P^{v_{ir}}_{v}(G_n,k)$ be the corresponding probability for each $k$. We can rewrite $P(G_n\vert v)$ for the case with single irregularity as: 
\begin{equation}
\label{finiteP}
P(G_n\vert v)=\sum\limits^{n-t^v_{v_{ir}}+1}_{k=D+1} m^{v_{ir}}_{v}(G_n,k)\cdot P^{v_{ir}}_{v}(G_n,k).
\end{equation} 
Thus, solving (\ref{main_problem}) means finding the vertex $\hat{v}$ that solves 
\begin{equation}
P(G_n\vert \hat{v})=\max\limits_{v_i\in G_n}P(G_n\vert v_i).
\end{equation}
Since $P(G_n\vert v)$ is no longer proportional to $\vert M(v,G_n)\vert$, we now describe how to compute $P(G_n\vert v)$ in $G_n$ over an underlying $d$-regular tree except the degree of $v_{ir}$ is $d'$. First, consider $P^{v_{ir}}_{v}(G_n,k)$ and let $z_d(i)=(i-1)(d-2)$, then
\begin{align}
\label{newp}
P^{v_{ir}}_{v}(G_n,k)=\prod\limits^{k-1}_{i=1}\frac{1}{d+z_d(i)}\cdot\prod\limits^{n-2}_{i=k-1}\frac{1}{d+z_d(i)+(d'-2)},
\end{align}
where the first factor of $P^{v_{ir}}_{v}(G_n,k)$ in (\ref{newp}) is the probability that $k$ vertices are infected once the virus reaches the irregular vertex, i.e.,  $v_{ir}$ is the $k$th vertex infected in $G_n$, and the second factor is the probability that all remaining $n-k$ vertices are infected thereafter. On the other hand, the value of $m^{v_{ir}}_{v}(G_n,k)$ in (\ref{minimvgn}) is dependent on the network topology, and thus there is no closed-form expression in  general (though when $G_n$ is a line, a closed-form expression for $m^{v_{ir}}_v(G_n,k)$ is given in (\ref{eq:line_mvgn})). We now use a special case, line graph, to demonstrate how an irregular vertex and network topology affect the probability $P(G_n\vert v)$.

\subsection{Analytical Characterization of Likelihood Function}

Suppose $G$ is a finite degree-regular tree and $G_n$ is a line graph with a single irregular vertex due to the bounded size of $G$. 
Without loss of generality, suppose $n$ is odd (to ensure a unique $v_c$) and $n=2t+1$ for some $t$. We label all the vertices in $G_n$ from $1$ to $2t+1$ and assume that $v_{2t+1}$ is the end vertex, i.e.,irregular vertex with degree $1$. To compute $P(G_{n}\vert v_i)$ for $v_i\in G_n$, from (\ref{finiteP}) and (\ref{newp}), we already have $P^{v_{ir}}_{v_i}(G_n,k)$, so we need to compute $m^{v_{ir}}_{v_i}(G_n,k)$. The enumeration of $m^{v_{ir}}_{v_i}(G_n,k)$ can be accomplished in polynomial-time complexity with a path-counting message-passing algorithm (see, e.g., Chapter 16 in \cite{mackay}). In particular, we have a closed-form expression for $m^{v_{ir}}_{v}(G_n,k)$ given by: 
\begin{equation}
\label{eq:line_mvgn}
m^{v_{ir}}_{v_i}(G_n,k)=\binom{k-2}{k-n+i-1},
\end{equation}  
when $i\neq n$, leading to an analytical formula for $P(G_{n}\vert v_i)$: 

$P(G_{n}\vert v_i)=$
\begin{equation}
\label{firsteqn}
\begin{cases}
\displaystyle\prod_{l=1}^{n-1}\frac{1}{z_d(l)+1},&\mbox{$i=n$;}
\\
\displaystyle\sum_{k=n-i+1}^{n}\binom{k-2}{k-n+i-1}\cdot P^{v_{ir}}_{v_i}(G_n,k),&\mbox{otherwise,}
\end{cases}
\end{equation}
where $P^{v_{ir}}_{v_i}(G_n,k)$ is given in (\ref{newp}).

In (\ref{firsteqn}), we suppose that $n$ is odd. Using (\ref{firsteqn}), let us numerically compute $P(G_n\vert v_i)$ for all $v_i$ in Fig. \ref{fig:line_d4}, where $G$ is a $4$-regular tree and $G_n$ is a line graph with a single irregular vertex $v_{ir}=v_{n}$ as boundary for different values of $n=7,8,9,10$. The $x$-axis is the vertex $v_i$ where $i=1,2,\dots,10$, and the $y$-axis plots $P(v_i=v^{\star} \vert G_n)$. Fig. \ref{fig:line_d4} illustrates that the influence due to the irregular vertex on $P(v_i=v^{\star} \vert G_n)$ dominates that of the epidemic center when $n=7,8,9$. However, the situation reverses when $n=10$ (i.e., the epidemic center on $P(G_n\vert v_i)$ is dominant thereafter). 

\begin{theorem}
\label{thm:longterm}
Suppose $G$ is a $d$-regular graph ($d>2$) with finite order. If $G_n$ is a line-graph with a single irregular vertex at one end of the line graph, then  there is a constant $j$ such that $P(G_n\vert v_c)>P(G_n\vert v_{ir}) $ when $n>j$.
\end{theorem}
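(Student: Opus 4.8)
The plan is to work directly with the closed-form expressions (\ref{firsteqn}) and (\ref{newp}), specialized to the degree-one irregular vertex ($d'=1$). Take $n=2t+1$ with $v_c=v_{t+1}$ the middle vertex of the path (the even case, where $v_c$ is one of the two central vertices, is handled identically and also tends to infinity, so proving the bound for all large odd $n$ and all large even $n$ suffices). Since $v_c$ is at distance $D=t$ from $v_{ir}=v_n$, the sum in (\ref{firsteqn}) runs over $k\in\{t+1,\dots,n\}$, and I would keep only the single top term $k=n$. This term is the cleanest: the second product in (\ref{newp}) is empty, so $P^{v_{ir}}_{v_c}(G_n,n)=\prod_{l=1}^{n-1}\frac{1}{d+(l-1)(d-2)}$, while (\ref{eq:line_mvgn}) gives $m^{v_{ir}}_{v_c}(G_n,n)=\binom{n-2}{t}=\binom{2t-1}{t}$. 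Discarding the remaining nonnegative terms yields
\begin{equation*}
P(G_n\vert v_c)\;\ge\;\binom{2t-1}{t}\prod_{l=1}^{n-1}\frac{1}{d+(l-1)(d-2)}.
\end{equation*}

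Next I would divide by $P(G_n\vert v_{ir})=\prod_{l=1}^{n-1}\frac{1}{(l-1)(d-2)+1}$, the $i=n$ branch of (\ref{firsteqn}). Using $d+(l-1)(d-2)=l(d-2)+2$ this gives
\begin{equation*}
\frac{P(G_n\vert v_c)}{P(G_n\vert v_{ir})}\;\ge\;\binom{2t-1}{t}\prod_{l=1}^{n-1}\frac{(l-1)(d-2)+1}{l(d-2)+2}.
\end{equation*}
The two factors compete: by Stirling $\binom{2t-1}{t}=\Theta(2^{n}/\sqrt{n})$ grows exponentially, whereas the product decays only polynomially. To establish the latter I would either take logarithms — each factor equals $1-\frac{d-1}{l(d-2)+2}$, which is $\ge \tfrac12$ for $l\ge2$, so $-\log$ of the product is bounded by the constant $l=1$ term plus $\frac{2(d-1)}{d-2}\sum_{l=2}^{n-1}\frac1l=O(\log n)$ — or recognize the product as a Gamma-function ratio $\frac{\Gamma(1+2/(d-2))}{\Gamma(1/(d-2))}\cdot\frac{\Gamma(n-1+1/(d-2))}{\Gamma(n+2/(d-2))}=\Theta\!\big(n^{-1-1/(d-2)}\big)$. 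Either route shows the product is $\Omega(n^{-\alpha})$ for a fixed exponent $\alpha=\alpha(d)$, which is where $d>2$ is used.

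Combining the two estimates, $\frac{P(G_n\vert v_c)}{P(G_n\vert v_{ir})}\ge c\,2^{n}n^{-1/2-\alpha}\to\infty$, so there is a threshold $j$ past which this ratio exceeds $1$; taking that $j$ proves the statement. I expect the only genuine work to be the polynomial lower bound on $\prod_{l=1}^{n-1}\frac{(l-1)(d-2)+1}{l(d-2)+2}$: in the logarithmic estimate the $l=1$ factor (equal to $1/d$, not close to $1$) must be split off and absorbed into a constant, and the Stirling bound for $\binom{2t-1}{t}$ must be quantitative enough to swamp the fixed polynomial loss. Two minor points to verify along the way are that $k=n$ indeed lies in the summation range $n-i+1\le k\le n$ for $i=t+1$ (it does, since $n-i+1=t+1\le n$), and that $v_c$ is the central path vertex — both already implicit in the setup of (\ref{firsteqn}).
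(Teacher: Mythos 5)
Your proposal is correct and follows essentially the same route as the paper: the paper likewise lower-bounds $P(G_n\vert v_c)$ by the single last term $P^{v_{ir}}_{v_c}(G_n,n)\cdot m^{v_{ir}}_{v_c}(G_n,n)$ with $m^{v_{ir}}_{v_c}(G_n,n)=\binom{2t-1}{t}$, compares it to the exact single-order probability $P(G_n\vert v_{ir})=\prod_{i=0}^{n-2}\frac{1}{1+i(d-2)}$, and shows via the Gamma-function/Stirling estimate that the ratio behaves like a constant times $2^{n-1}n^{-c_2}$, hence exceeds $1$ for large $n$. Your logarithmic bound on the product is just a more explicit version of the same Gamma-ratio asymptotics the paper invokes.
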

Remark: When $n$ increases, i.e., the distance between $v_c$ and $v_{ir}$ increases, then the location of $\hat{v}$ in $G_n$ converges to the neighborhood of the epidemic center. 
\begin{example}
To verify Theorem \ref{thm:longterm}, we plot $P(G_n\vert v_i)$ for a line graph $G_n$ with $G$ being a finite $4$-regular graph in Fig. \ref{fig:line_d4}. Clearly, we have $j=9$.
\end{example}

Theorem \ref{thm:longterm} implies that, for any $d$-regular underlying graph, when $G_n$ is a line graph with a single irregular vertex, the influence of the irregular vertex $v_{ir}$ on $P(v_i|G_n)$ decreases monotonically as $n$ grows. In fact, this reduces to the special case in \cite{shah2}, when $n$ goes to infinity asymptotically, i.e., $\hat{v}$ is the ML estimation of the source.

\begin{figure}
\begin{center}
\includegraphics[scale=0.4]{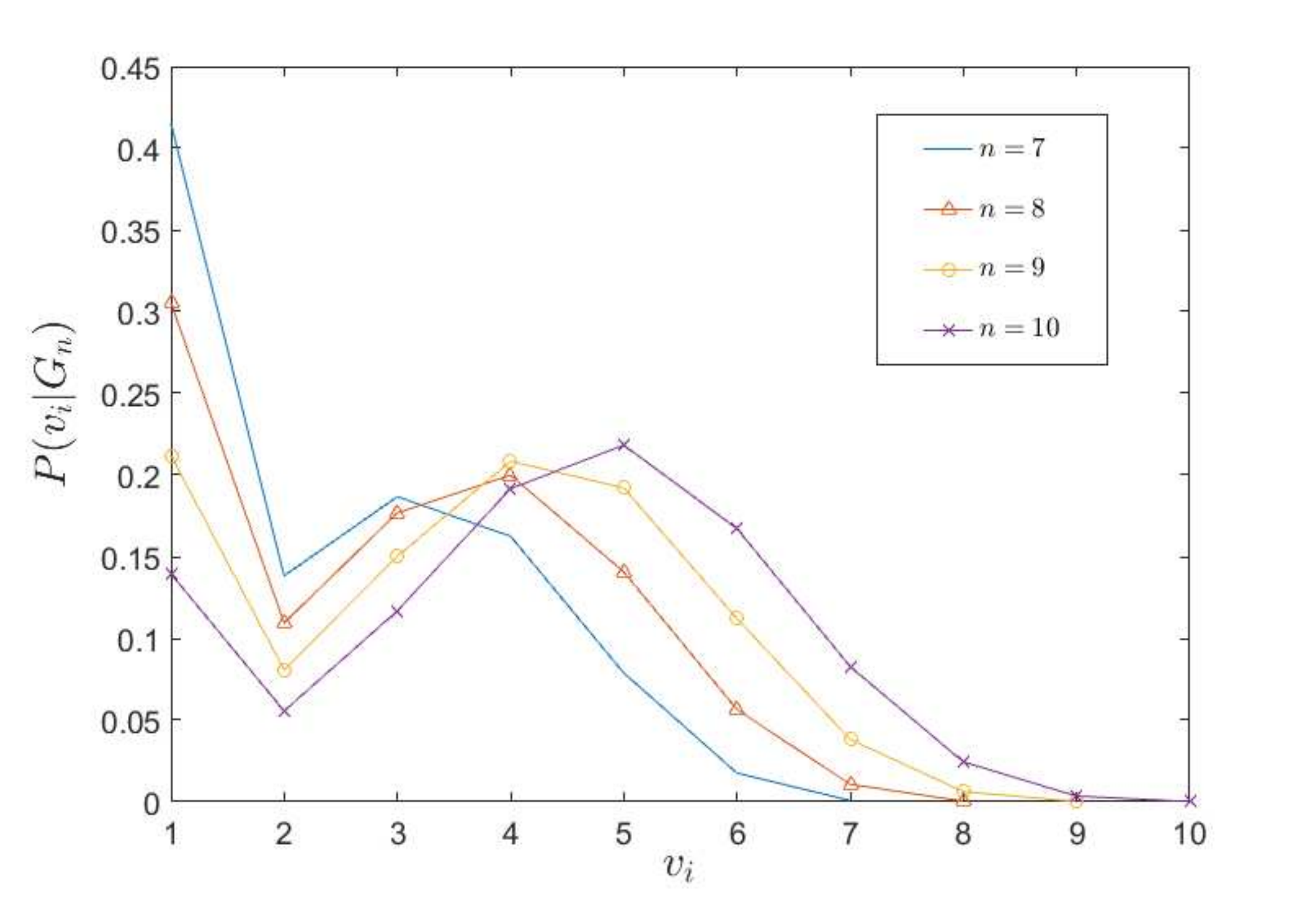}
\caption{$P(G_n\vert v)$, where $G_n$ is a line graph with a single irregular vertex $v_1$ over an underlying $4$-regular finite graph.}
\label{fig:line_d4}
\end{center}
\end{figure}

\subsection{Optimality Characterization of Likelihood Estimate} 
\label{subsec:location}

Example \ref{exam:1} reveals that in addition to the spreading order, the distance (number of hops) between the irregular vertex and $v$ also affects the likelihood probability $P(G_n\vert v)$. We can conclude that there are two factors affect the likelihood of $v$ being the source, one is the distance $d(v, v_{ir})$ and the other one is the epidemic centrality $\vert M(v,G_n)\vert$. Intuitively, if we consider two vertices in $G_n$ say $v_a$ and $v_b$ where $\vert M(v_a, G_n)\vert > \vert M(v_b, G_n)\vert$ and $d(v_a, v_{ir}) < d(v_b, v_{ir})$, then the above observation may lead us to $P(G_n\vert v_a)>P(G_n\vert v_b)$. We formalize this optimality result that characterizes the probabilistic inference performance between any two vertices and the location of $\hat{v}$ in $G_n$ with a single irregular vertex $v_{ir}$ in the next theorem.

\begin{lemma}
\label{lem:key_lem}
Let $G$ be a tree of size $n$, for any three vertices say $v_a$, $v_b$ and $v_{ir}$, if one of the following conditions is satisfied
\begin{enumerate}
    \item $\vert M(v_a, G_n)\vert \geq \vert M(v_b, G_n)\vert$ and $d(v_a, v_{ir}) < d(v_b, v_{ir})$
    \item $\vert M(v_a, G_n)\vert > \vert M(v_b, G_n)\vert$ and $d(v_a, v_{ir}) \leq d(v_b, v_{ir})$,
\end{enumerate}
 then we have 
\begin{center}
    $\sum\limits_{i=d(v_a, v_{ir})+1}^{k}m_{v_a}^{v_{ir}}(G_n, i)\geq \sum\limits_{i=d(v_a, v_{ir})+1}^{k}m_{v_b}^{v_{ir}}(G_n, i)$,
\end{center}
for all possible $k$.
\end{lemma}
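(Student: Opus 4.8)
The plan is to read the partial sums combinatorially and reduce, by cutting the tree along a single edge, to a one–variable inequality between two hypergeometric-type quantities. For $v\in\{v_a,v_b\}$ let $N_v(k)$ be the number of spreading orders started at $v$ in which $v_{ir}$ is among the first $k$ infected vertices, i.e.\ $N_v(k)=\sum_{i\le k}m_v^{v_{ir}}(G_n,i)$; since $m_{v_b}^{v_{ir}}(G_n,i)=0$ whenever $i\le d(v_b,v_{ir})$, the asserted inequality is exactly $N_{v_a}(k)\ge N_{v_b}(k)$ for every $k$. I will also use two classical facts about the rumor centrality: for adjacent $u\sim u'$ one has $|M(u,G_n)|\ge|M(u',G_n)|$ precisely when the component of $u$ in $G_n-\{uu'\}$ is at least as large as that of $u'$; and along any path of a tree, $u\mapsto|M(u,G_n)|$ is log–concave, hence unimodal.

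\emph{Core step ($v_a\sim v_b$).} If $v_a$ is the endpoint strictly closer to $v_{ir}$, then $v_{ir}$ lies in the component $T_a\ni v_a$ of $G_n-\{v_av_b\}$; write $T_b\ni v_b$ for the other component and $\alpha=|T_a|$, $\beta=|T_b|$, so that $\alpha\ge\beta$ is equivalent to $|M(v_a)|\ge|M(v_b)|$. A spreading order started at $v_a$ (resp.\ $v_b$) is exactly a shuffle of a spreading order $\pi$ of $T_a$ rooted at $v_a$ with a spreading order $\rho$ of $T_b$ rooted at $v_b$ whose first letter lies in $T_a$ (resp.\ in $T_b$), and in it the position of $v_{ir}$ equals its position $j$ inside $\pi$ plus the number of $T_b$–vertices preceding it. Summing over $\rho$ (a factor $|M(v_b,T_b)|$) and collecting the $\pi$'s by $j$ (a factor $m_{v_a}^{v_{ir}}(T_a,j)$) gives $N_{v_a}(k)=|M(v_b,T_b)|\sum_j m_{v_a}^{v_{ir}}(T_a,j)\,g(j,k)$ and $N_{v_b}(k)=|M(v_b,T_b)|\sum_j m_{v_a}^{v_{ir}}(T_a,j)\,h(j,k)$, where $g(j,k)$ (resp.\ $h(j,k)$) is the number of words on $\alpha$ letters $P$ and $\beta$ letters $Q$ that begin with $P$ (resp.\ with $Q$) and whose $j$-th $P$ occurs within the first $k$ letters. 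Because the coefficients $m_{v_a}^{v_{ir}}(T_a,j)$ are nonnegative, it suffices to show $g(j,k)\ge h(j,k)$ for all $j,k$. Letting $X$ be the number of $P$'s among the first $k$ letters of a uniformly random such word, this reads $\Pr[X\ge j,\,w_1=P]\ge\Pr[X\ge j,\,w_1=Q]$; since $\Pr[w_1=P]=\alpha/n\ge\beta/n=\Pr[w_1=Q]$ it is enough to have $\Pr[X\ge j\mid w_1=P]\ge\Pr[X\ge j\mid w_1=Q]$. Conditioned on $w_1=P$, $X=1+\mathrm{Hyp}(n-1,\alpha-1,k-1)$; conditioned on $w_1=Q$, $X=\mathrm{Hyp}(n-1,\alpha,k-1)$; splitting off one of the $\alpha$ successes as a separately counted item realizes the latter as pointwise $\le$ the former, which yields the required stochastic domination. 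Both hypotheses (1) and (2) reduce to this case, since adjacency forces $d(v_a,v_{ir})$ and $d(v_b,v_{ir})$ to differ by exactly one.

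\emph{General step and the main difficulty.} For arbitrary $v_a,v_b$ I would induct on $d(v_a,v_b)$, inserting a vertex $p_s$ on the $v_a$–$v_b$ geodesic $v_a=p_0,\dots,p_\ell=v_b$. Along this geodesic $d(\cdot,v_{ir})$ is ``V–shaped'' with minimum at the median $w$ of $\{v_a,v_b,v_{ir}\}$ (and $d(v_a,v_{ir})\le d(v_b,v_{ir})$ puts $w$ in the half nearer $v_a$), while $|M(\cdot)|$ is unimodal; combining these two shapes one can, outside the ``tie'' configurations, pick an interior $p_s$ with $d(v_a,v_{ir})\le d(p_s,v_{ir})\le d(v_b,v_{ir})$ and $|M(v_b)|\le|M(p_s)|\le|M(v_a)|$, so that the hypotheses descend to the pairs $(v_a,p_s)$ and $(p_s,v_b)$, each of smaller geodesic distance, and the induction closes through the core step. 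The obstruction is exactly the tie configurations $d(v_a,v_{ir})=d(v_b,v_{ir})$ (forcing condition~2, $|M(v_a)|>|M(v_b)|$) and $|M(v_a)|=|M(v_b)|$ (forcing condition~1 with strict distances): there $v_{ir}$ hangs off the midpoint of the geodesic and no interior $p_s$ works, so one must argue directly. I expect these to be handled by re-running the cut-and-shuffle bookkeeping of the core step with the cut taken at the median $w$ — decomposing each spreading order into its $v_a$–branch, its $v_b$–branch, the vertex $w$, and a common remainder that factors out — and then checking that the residual one–variable inequality still holds, the strict ingredient (a strict size comparison of the two branches, or the strict distance) supplying the needed slack. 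Making this last reduction airtight, and verifying that the word inequality survives it, is the step I expect to be the hardest.
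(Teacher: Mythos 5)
Your core step for adjacent $v_a\sim v_b$ is correct, and it is a genuinely different argument from the paper's: the paper writes $m_v^{v_{ir}}(G_n,i)$ as a binomial coefficient times fixed subtree factors and exploits monotonicity in $i$ of the ratio $m_{v_a}^{v_{ir}}(G_n,i)/m_{v_b}^{v_{ir}}(G_n,i)$ (comparing termwise when the ratio increases and the initial terms agree, and arguing by contradiction from the known comparison of the totals $\vert M(v_a,G_n)\vert\geq\vert M(v_b,G_n)\vert$ when it decreases), whereas you cut the edge $v_av_b$, factor each spreading order into a shuffle of the two subtree orders, and reduce to the coupling inequality $\mathrm{Hyp}(n-1,\alpha,k-1)\leq 1+\mathrm{Hyp}(n-1,\alpha-1,k-1)$. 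That reduction and the coupling are sound, and arguably cleaner than the paper's ratio computation for this configuration.

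The general step, however, is a genuine gap, and in two respects. First, you explicitly leave the tie configurations open ("the step I expect to be the hardest"), so the proof is incomplete by your own account; note that one of these ties, $d(v_a,v_{ir})=d(v_b,v_{ir})=1$ with $v_{ir}$ the common neighbor, is precisely one of the two base cases the paper actually proves (and the one its application in Theorem \ref{thm:keyfinite} needs), handled there not by chaining but by showing the ratio $m_{v_a}^{v_{ir}}(G_n,i)/m_{v_b}^{v_{ir}}(G_n,i)$ is increasing in $i$ with equal terms at $i=2$. Second, even outside the ties your descent can fail: unimodality of $\vert M(\cdot)\vert$ along the geodesic does not prevent its mode from lying among exactly those interior vertices that satisfy the distance constraint $d(v_a,v_{ir})\leq d(p_s,v_{ir})$. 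Take a geodesic $p_0p_1p_2p_3$ with $v_{ir}$ hanging off $p_1$ and branch sizes arranged so that $\vert M(p_0)\vert<\vert M(p_2)\vert$ while $\vert M(p_0)\vert>\vert M(p_3)\vert$ (log-concavity permits, e.g., centralities proportional to $10,20,15,5$); then the only interior vertex at distance at least $d(p_0,v_{ir})$ from $v_{ir}$ is $p_2$, and $\vert M(p_2)\vert>\vert M(v_a)\vert$, so neither hypothesis descends to the pair $(v_a,p_s)$ for any admissible $p_s$, even though this is not a tie configuration. You also never track the strictness required by conditions (1) and (2) through the descent. To close the argument you would do better to abandon the geodesic chaining and instead adapt the paper's device — establish monotonicity of the term ratio and combine it with the given comparison of the full sums $\vert M(v_a,G_n)\vert$ and $\vert M(v_b,G_n)\vert$ via the mediant inequality — which yields the partial-sum domination directly for the remaining configurations.
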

Remark: Lemma \ref{lem:key_lem} applies for any positive degree of $v_{ir}$. We can verify Lemma \ref{lem:key_lem} by Fig. \ref{fig:irregular_ex}. Let $v_a=v_1$, $v_b=v_2$ and $v_{ir}=v_5$, then $\vert M(v_1, G_n)\vert = \vert M(v_2, G_n)\vert=20$ and $d(v_1, v_5) < d(v_2, v_5)$ which satisfies the first condition of Lemma \ref{lem:key_lem}. Hence, the partial sum of $\vert M(v_1, G_6) \vert$ is always greater or equal to the partial sum of $\vert M(v_2, G_6) \vert$.  

\begin{theorem}
\label{thm:keyfinite}
Let $G$ be a $d-$regular tree with a single irregular vertex and $G_n \subseteq G$ be a subtree of $G$ with a single irregular vertex $v_{ir} \in G_n$. where $deg(v_{ir})<d$. Then, the maximum likelihood estimator $\hat{v}$ with maximum probability $P(G_n\vert v)$ is located on the path from the $v_c$ to $v_{ir}$. 
\end{theorem}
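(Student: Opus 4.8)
The plan is to prove the theorem by a contradiction/exchange argument built directly on Lemma~\ref{lem:key_lem}. Suppose $\hat v$ achieves the maximum of $P(G_n\vert v)$ but does not lie on the path from $v_c$ to $v_{ir}$. The idea is to produce a second vertex $v'$ on that path (or at least closer to it) for which $P(G_n\vert v')\ge P(G_n\vert \hat v)$, with strict inequality in the generic case, contradicting optimality. To do this I would first set up the decomposition \eqref{finiteP}: for any vertex $v$, $P(G_n\vert v)=\sum_{k}m^{v_{ir}}_{v}(G_n,k)\cdot P^{v_{ir}}_{v}(G_n,k)$, where the weights $P^{v_{ir}}_{v}(G_n,k)$ from \eqref{newp} depend on $v$ only through the distance $D=d(v,v_{ir})$, and — crucially — are \emph{decreasing} in $k$ (each additional factor $1/(d+z_d(i))$ that appears later rather than earlier makes the product smaller, because the extra $(d'-2)$ correction is negative when $d'<d$). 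So $P(G_n\vert v)$ is a weighted sum of the partial-position-counts $m^{v_{ir}}_v(G_n,k)$ against a decreasing weight sequence.

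The key technical tool is then an Abel summation (summation by parts): once the weight sequence $P^{v_{ir}}_{v}(G_n,k)$ is nonincreasing in $k$, a pointwise domination of the \emph{partial sums} $\sum_{i\le k} m^{v_{ir}}_{v_a}(G_n,i) \ge \sum_{i\le k} m^{v_{ir}}_{v_b}(G_n,i)$ for all $k$ upgrades to domination of the full weighted sums, i.e. $P(G_n\vert v_a)\ge P(G_n\vert v_b)$. This is exactly the hypothesis conclusion of Lemma~\ref{lem:key_lem}. There is a bookkeeping subtlety: the two partial sums start at different lower indices ($d(v_a,v_{ir})+1$ versus $d(v_b,v_{ir})+1$), and the weight sequences for $v_a$ and $v_b$ are not literally the same sequence (they are shifts of the same $k\mapsto P^{v_{ir}}_{\cdot}(G_n,k)$ formula evaluated with different $D$); I would need to check that the shorter-distance vertex's weights dominate termwise as well, which again follows from monotonicity in $D$ of \eqref{newp} — having $v_{ir}$ reachable sooner can only increase each term. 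Combining the weight domination with the partial-sum domination from Lemma~\ref{lem:key_lem} gives the comparison in the two stated cases.

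With the comparison lemma in hand, the final step is the structural argument placing $\hat v$ on the $v_c$--$v_{ir}$ path. The path from $v_c$ to $v_{ir}$ is a geodesic in the tree; any vertex $w\notin$ this path has a unique "projection" $w^*$ onto the path (the last common vertex on the $w$-to-$v_{ir}$ path and on the $w$-to-$v_c$ path). I claim $w^*$ beats $w$: moving from $w$ to $w^*$ strictly decreases $d(\cdot,v_{ir})$ (since $w^*$ lies on the $w$-to-$v_{ir}$ path), and it does not decrease the epidemic centrality $\vert M(\cdot,G_n)\vert$ — because $w^*$ is at least as close to the centroid $v_c$ as $w$ is, and $\vert M(\cdot,G_n)\vert$ is known to be (unimodally) maximized at $v_c$ with values non-decreasing as one moves toward $v_c$ along any path (this is the rumor-centrality/centroid monotonicity from \cite{who,tan3,ma}, which I would invoke). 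Hence either case~(1) or case~(2) of Lemma~\ref{lem:key_lem} applies with $v_a=w^*$, $v_b=w$, yielding $P(G_n\vert w^*)\ge P(G_n\vert w)$. Therefore a maximizer can always be taken on the path, and if $\hat v$ is the \emph{unique} maximizer it must already be on the path.

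I expect the main obstacle to be the monotonicity-in-$D$ claim about the weights \eqref{newp} and, relatedly, reconciling the mismatched summation ranges in Lemma~\ref{lem:key_lem} so that the Abel-summation step is fully rigorous — in particular handling the "extra" low-$k$ terms that $v_a$ has but $v_b$ does not (these carry the \emph{largest} weights, so they only help, but this needs to be said carefully). A secondary point to nail down is the precise statement of the centrality-monotonicity-toward-$v_c$ fact on the tree $G_n$, since the theorem's path is defined via $v_c$; if the literature only gives "maximized at $v_c$" rather than "monotone along paths to $v_c$," I would instead argue directly that for the projection $w^*$ the inequality $\vert M(w^*,G_n)\vert\ge\vert M(w,G_n)\vert$ holds by a subtree linear-extension counting comparison along the single edge step from $w$ toward $w^*$.
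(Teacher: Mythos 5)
Your proposal is correct and follows essentially the same route as the paper's proof: the paper likewise combines the partial-sum domination of Lemma~\ref{lem:key_lem} with the fact that the weights $P^{v_{ir}}_{v}(G_n,k)$ in (\ref{newp}) are decreasing in $k$ (an implicit Abel-summation step), and then walks any off-path vertex one edge at a time toward the $v_c$--$v_{ir}$ path, each step reducing $d(\cdot,v_{ir})$ while not decreasing $\vert M(\cdot,G_n)\vert$. Your only worry that does not arise in the paper is the "mismatched weight sequences," which is moot since (\ref{newp}) depends on $v$ only through the admissible range of $k$, and the extra low-$k$ terms carry the largest weights, exactly as you note.
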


The above theorem can be immediately deduced from Lemma \ref{lem:key_lem}. In addition,  we can leverage Lemma \ref{lem:key_lem} in the case when $deg(v_{ir})>d$ to reduce the search space of the optimization problem (\ref{main_problem}). We can conclude only that the ML estimator is on the path from  $v_c$ to $v_{ir}$, which is illustrated in Fig. \ref{fig:thm1}, thus narrowing down the search for the ML estimator to a vertex on this path.


\begin{figure}
\begin{center}
\includegraphics[scale=0.23]{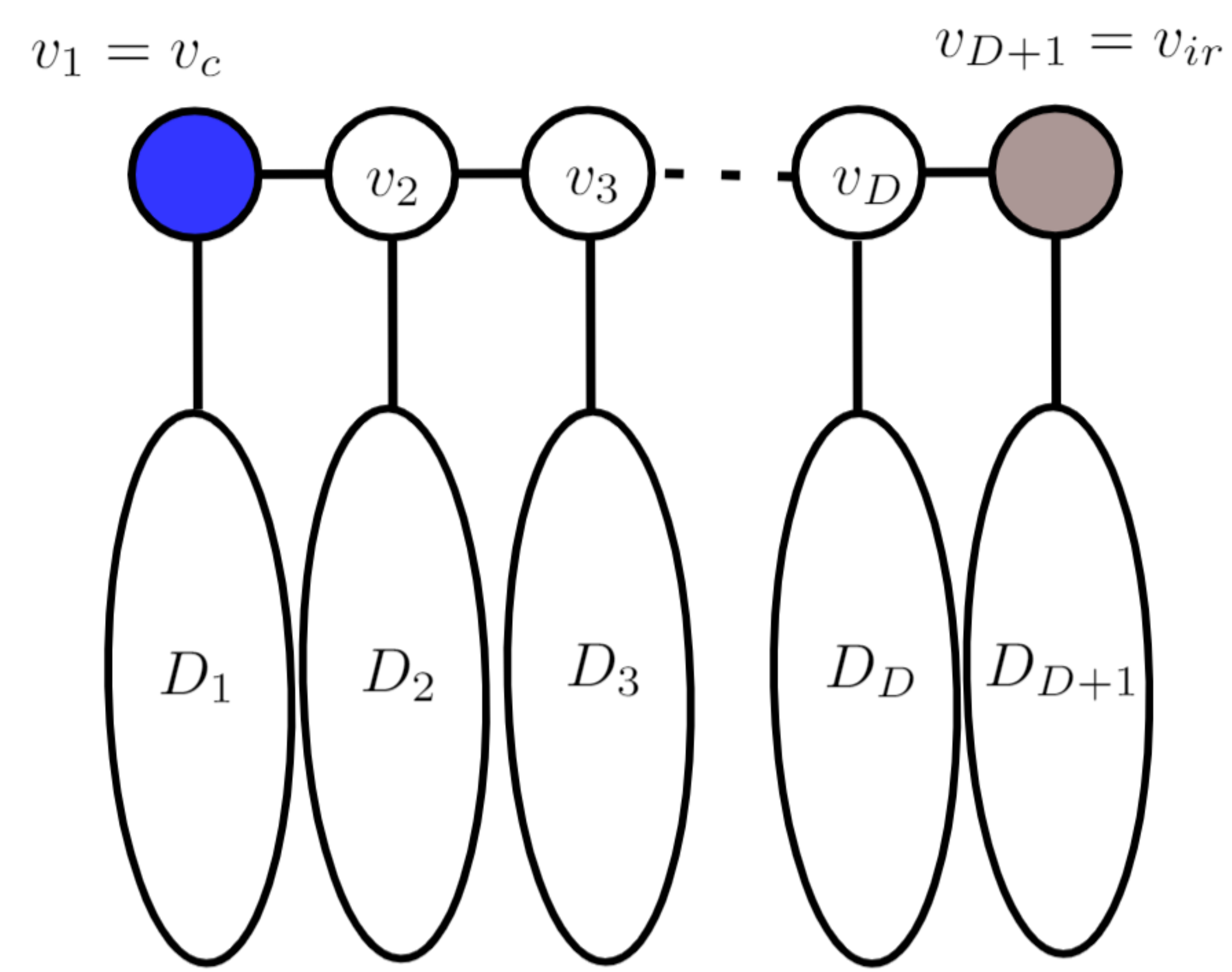}
 \caption{An intuitive illustration of Theorem \ref{thm:keyfinite}. If $v_1=v_c$ is the epidemic center and $v_{D+1}=v_{ir}$ is the irregular vertex where $deg(v_{ir})$ less than other vertices.  
   From Lemma \ref{lem:key_lem}, we have $P(G_n\vert v_i)> P(G_n\vert v)$ for all $v \in D_i$, where $i=1,2,\ldots, D+1$. Hence, we only need to consider $v_1, v_2, \ldots, v_{D+1},$ to find the ML estimator for the source.}
\label{fig:thm1}
\end{center}
\end{figure}

\section{Trees with Multiple Degree-one Irregular Vertices}
In this section, we consider the case when $G_n$ has  more than a one irregular vertex (naturally, this also means $d>2$ in $G$ ruling out the trivial case of $G$ being a line). The key insight from the single irregular vertex analysis still holds: Once the virus reaches an irregular vertex in $G$, $\hat{v}$ can be located near this very first  infected irregular vertex. In addition, the algorithm design approach is to decompose the graph into subtrees to narrow the search for the maximum-likelihood estimate solution. To better understand the difficulty of solving the general case, we start with a special case: The entire finite underlying network is infected, i.e., $G_n=G$, then $P(G_n \vert v)=1/n$ for each vertex in $G_n$, as each vertex is equally likely to spread the virus to all the other vertices in $G$ to yield $G_n=G$. In this case, $P(G_n\vert \hat{v})$ is exactly the minimum detection probability. So the bound of $P(G_n\vert \hat{v})$ given in previous study are not suitable for the case with irregular vertices. Therefore, when simulating the virus spreading in a network, we will set an upper bound $n/k$ of the number of irregular vertices where $k$ is some integer greater than 1, once the number of irregular vertices in $G_n$ reaches to $\lceil n/k \rceil$, then we will stop the spreading process.   
\subsection{Degree-Regular Tree ($d\geq 3$) Special Case: $G_n$ is Broom-Shaped} 
\label{broom_sec}

\begin{figure}
\begin{center}
\includegraphics[scale=0.3]{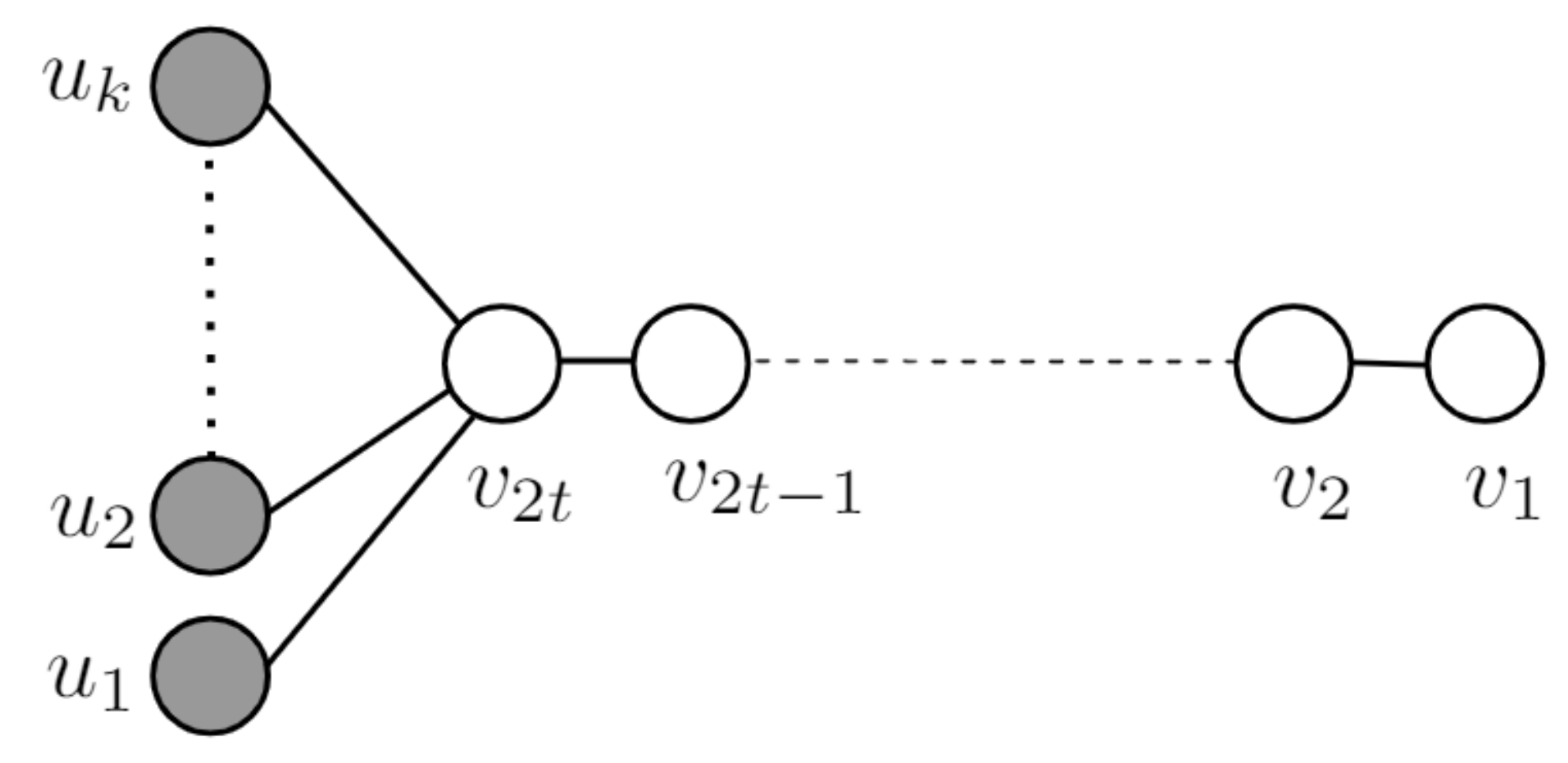}
  \caption{$G_n$ as a broom graph with $k$ star-like end vertices $e_1$ to $e_k$.}
  \label{fig:broom}
\end{center}
\end{figure}

In Section \ref{sec:finite}, we have shown that, when $G_n$ is sufficiently large, the effect of the single irregular vertex on $P(v\vert G_n)$ for each vertex $v$ on the line graph $G_n$ is dominated by $v_c$. Now, we study the effect of multiple degree-one irregular vertices on a class of graphs whose topology is richer than the line graph in Section \ref{sec:finite}. In particular, as shown in Fig. \ref{fig:broom}, we add degree-one irregular vertices to $v_{2t}$, so that when $G$ is $d$-regular, then there will be at most $d-1$ degree-one irregular vertices in $G_n$. We call this the {\it broom} graph. We can compute $P(v\vert G_n)$ by extending the result in Section \ref{sec:finite}. Let $P_{v_i}^{\lbrace u_1,u_2,\dots,u_k\rbrace}(\lbrace h_1,h_2,\dots ,h_k\rbrace, G_n)$ be the probability of the spreading order  starting from $v_i$ with the irregular vertex set $\lbrace u_1,u_2,\dots ,u_k\rbrace$ and their position set $\lbrace h_1,h_2,\dots ,h_k\rbrace$ in this spreading order. We do not assume that $h_i$ is the position of $u_i$, as it can be the position of any irregular vertex in $G_n$. The probability $P_{v_i}^{\lbrace u_1,u_2,\dots,u_k\rbrace}(\lbrace h_1,h_2,\dots ,h_k\rbrace, G_n)$ can be obtained by the same analysis in (\ref{newp}). To compute $m_{v_c}^{\lbrace u_1,u_2,\dots,u_k\rbrace}(\lbrace h_1,h_2,\dots,h_k\rbrace, G_n)$, we first consider the line-shaped part of $G_n$, i.e., the part $\lbrace v_1,v_2,\dots,v_{2t}\rbrace$, say $G_n'$. From the previous discussion, we have 
\begin{center}
$m^{v_{2t}}_{v_i}(G_n',j)=\binom{j+(2t+i-1)}{j}$, 
\end{center}
and for each spreading order that $v_{2t}$ lies on the $j$th position, the degree-one irregular vertices $u_1,u_2,\dots,u_k$ can be placed on any position after the $j$th position. So for each spreading order in $m^{v_{2t}}_{v_i}(G_n',j)$, there are $k!\cdot\binom{n-k-j+1}{k}$ corresponding spreading orders in $G_n$. Thus, we have 
\begin{equation}
\label{p_multi}
m_{v_i}^{\lbrace u_1,\dots,u_k \rbrace}(G_n,\lbrace h_1,\dots,h_k\rbrace )=k!\sum\limits^{h_1-1}_{j=2t-i+1}\binom{j-2}{2t-i-1}.
\end{equation}

With $P_{v_i}^{\lbrace u_1,u_2,\dots,u_k\rbrace}$ and $m_{v_i}^{\lbrace u_1,u_2,\dots,u_k\rbrace}$, we can now compute the probability $P(v_i\vert G_n)$ by going through all possible $\lbrace h_1,h_2,\dots,h_k\rbrace$. Fig. \ref{fig:d6n25} shows that even though there are five irregular vertices, the effect of $v_c$ on $P(v\vert G_n)$ eventually dominates that of the irregular vertices as $n$ grows from 37 to 39. These result implies that: When there are more degree-one irregular vertices in $G_n$, $n$ needs to be sufficiently large to offset the effect of irregular vertices, i.e., for the transition phenomenon to take place. 
For other $d$ and $n$ in the $broom$ graph, as shown in the proof of Theorem \ref{thm:longterm}, we can prove this in the same way to conclude that, if we fix the number of irregular vertex, the probability $P(v_c\vert G_n)$ will be  greater than $P(v_{ir}\vert G_n)$ when $n$ is large enough. 

\begin{figure}
\begin{center}
\includegraphics[scale=0.4]{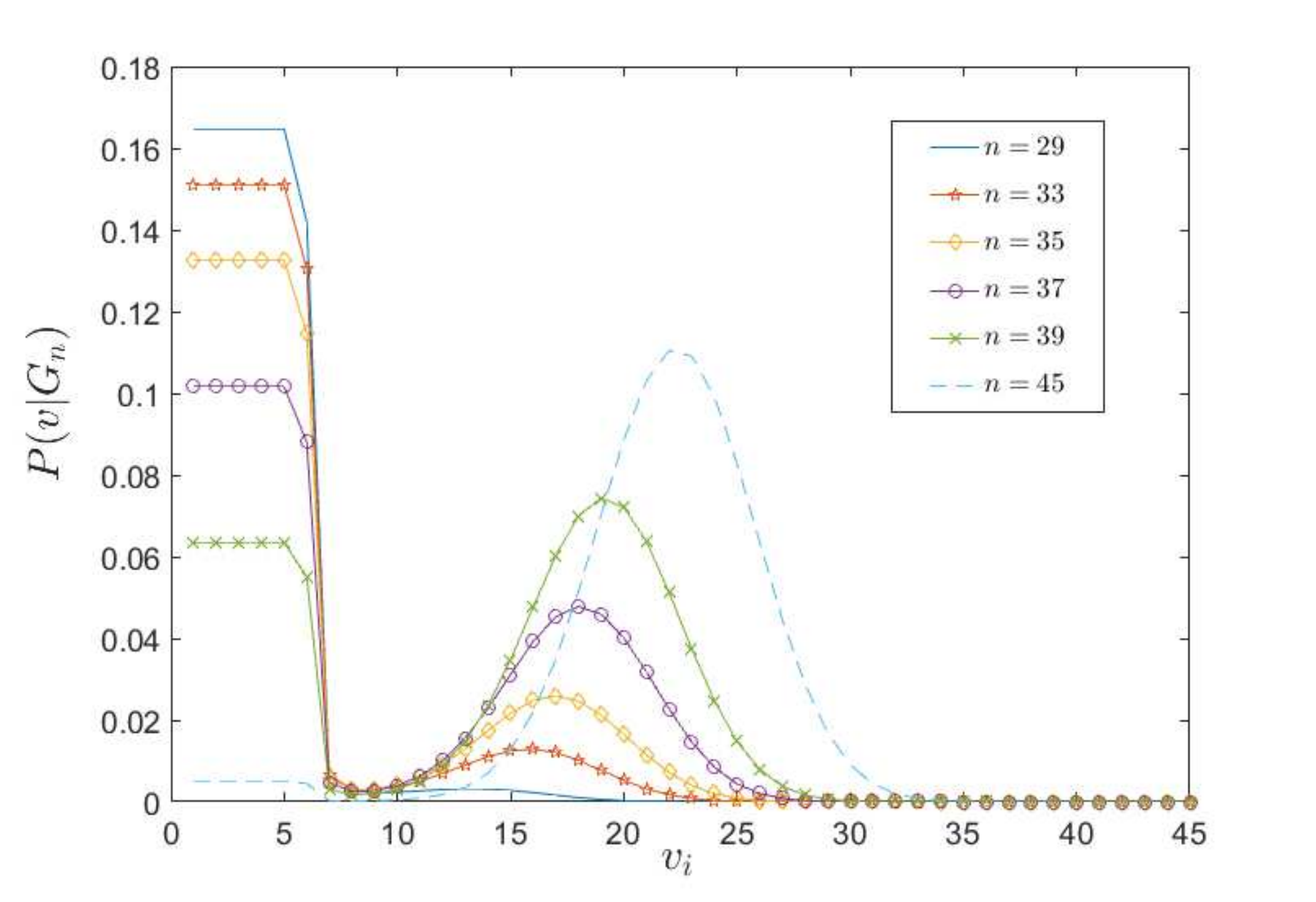}
\caption{Probability distribution of each vertex on $G_n$ with five degree-one irregular vertices when $G$ is $6$-regular, The $y$-axis plots the probability $P(v_i\vert G_n)$ and the $x$-axis plots the vertex $v_i$'s number $i$. In particular, $v_1,\dots,v_5$ are the leaves (degree-one irregular vertices) corresponding to $u_1,\dots,u_k$ in Fig. \ref{fig:broom}, where $k=5$. Observe that the transition phenomenon happens when $n$ grows from $37$ to $39$.}
\label{fig:d6n25}
\end{center}
\end{figure}

\subsection{Message-passing Algorithm}

We propose a message-passing algorithm to find  $\hat{v}$ on the finite regular tree $G$ by leveraging the key insights derived in the previous sections. We summarize these features as follows:
\begin{enumerate}
\item If there is only a single irregular vertex $v_{ir}$ in $G_n$, then $\hat{v}$ is located on the path from $v_c$ to $v_{ir}$.
\item If $G_n=G$, then for all $v_i\in G_n$, $P(G_n\vert v_i)=1/n$.
\item If $G_n$ has $q$ degree-one irregular vertices, then there exists an $n'$ such that, if $n>n'$, then $P(G_n\vert v_c)>\max\limits_{1\leq i \leq k}\lbrace P(G_n\vert v_{u_i})\rbrace$. Furthermore, $n'$ increases as $q$ increases.
\item If two vertices $v_1$ and $v_2$ are on the symmetric position of $G_n$, then $P(G_n\vert v_1 )=P(G_n\vert v_2)$. For example, $u_1, u_2, \ldots , u_k$ are topologically symmetric in Fig. \ref{fig:broom}. 
\end{enumerate}
In particular, Feature 1 is the optimality result related to decomposing $G_n$ into subtrees to search for $\hat{v}$. The subtree $t_{ML}$ in $G_n$ corresponds to first finding the decomposed subtree containing $v_c$ and the likelihood estimate needed for Theorem \ref{thm:keyfinite} to apply. Then, Features 3 and 4 identify $\hat{v}$ on a subtree $t_{ML}$ of $G_n$ as Theorem \ref{thm:keyfinite} only pinpoints the relative position of $\hat{v}$.

\begin{algorithm}[htb]
\caption{Message-passing algorithm to compute $\hat{v}$ for $G_n$ with multiple irregular vertices}
\label{algo:multi_algo}
\begin{algorithmic}
\Require 
$G_n$, $\kappa=\lbrace \rbrace$ 
\State Step 1: Compute $v_c$ of $G_n$.
\State Step 2: Choose $v_c$ as the root of a tree and  use a message-passing algorithm to count the number of irregular vertices on each branch of this rooted tree.
\State Step 3: Starting from $v_c$, and at each hop choose the child with the maximum number of irregular vertices (if there were more children with the same maximal number of irregular vertices, then choose all of them). This tree traversal yields a subtree $t_{ML}$ rooted at $v_c$. 
\State \textbf{Output:} $\kappa=\lbrace \mbox{parent vertices of leaves of } t_{ML},v_c\rbrace$
\end{algorithmic}
\end{algorithm}

Algorithm \ref{algo:multi_algo} first finds the epidemic center of $G_n$ and then determines the number of irregular vertices corresponding to each branch of the epidemic center $v_c$. The final step is to collect vertices on the subtree where $\hat{v}$ is, leading to a subtree of $G_n$ denoted as $t_{ML}$. Observe that each step requires $O(n)$ computational time complexity and $t_{ML}$ in a graph with multiple irregular vertices is akin to the \emph{path} from $v_c$ to the irregular vertex in an infected subgraph with a single irregular vertex in Section \ref{sec:finite}. Finally, we obtain a set $\kappa$ containing the parent vertices of the leaves of $t_{ML}$ and $v_c$. 

\begin{figure}
\begin{center}
\includegraphics[scale=0.26]{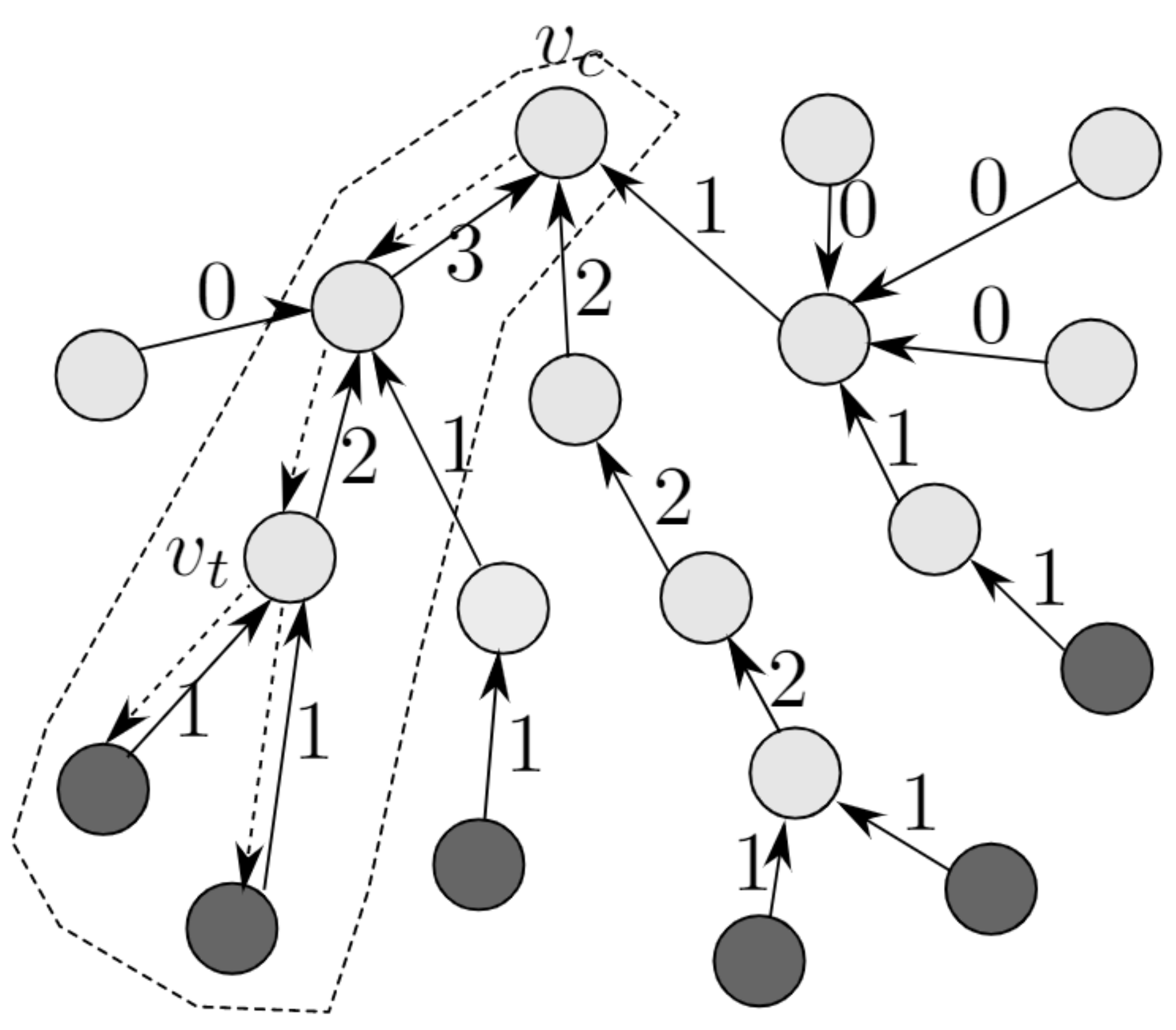}
    \caption{An illustration of how Algorithm \ref{algo:multi_algo} works on a tree graph rooted at $v_c$ with six degree-one irregular vertices (more shaded). Observed that $v_c$ branches out to three subtrees. Here, $t_{ML}$ is the subtree containing the five vertices within the dotted line. The numerical value on the edge indicates the message containing the number of end vertices being counted.}
  \label{algo2graph}
\end{center}
\end{figure}

Next, we use the example in Fig. \ref{algo2graph} to illustrate Algorithm \ref{algo:multi_algo}. Let $G_{19}$ be the network in Fig. \ref{algo2graph} with the six degree-one irregular vertices depicted as more shaded. Suppose $v_c$ is determined by the end of Step 1. Then, Step 2 enumerates the number of irregular vertices at each branch of the subtrees connected to $v_c$, and these numbers are then passed iteratively from the leaves to $v_c$. These messages correspond to the numerical value on the edges in Fig. \ref{algo2graph}. The message in Step 2 is an $upward$ (leaf-to-root) message. Step 3 is a message passing procedure from $v_c$ back to the leaves, which is a $downward$ message, and the message is the maximum of number of irregular vertices in each branch. For example, the message from $v_c$ to $child(v_c)$ is $\max\lbrace 1,2,3\rbrace$ which is $3$. Lastly, the second part of Step 3 collects those vertices whose $upward~message=downward~message$. For example, the left hand-side child of $v_c$ is first added to $t_{ML}$, and then $v_t$ is added to $t_{ML}$, and finally, the two leaves on the left hand-side is added to $t_{ML}$. Observe that $t_{ML}$ must be connected.

\subsection{Simulation Results for Finite General Tree Networks}

We simulate the virus spreading on finite regular tree networks and general tree networks with $\vert G\vert =1000$ and $\vert G_n\vert=100$. Each vertex in a general tree network is not larger than a positive integer $d_m$. The construction of $G$ is a random branching process in which we start with a single vertex $v_1$, and then randomly pick an integer, say $i$, from $1$ to $d_m-1$ to be the number of children of $v_1$, and then to assign $v_2$ to $v_{i+1}$ to be the neighborhood of $v_1$. Recursively applying these steps generates a finite tree $G$ with five thousand vertices whose maximum degree is not larger than $d_m$.

We simulate a thousand times the spread of the virus on $G$ by picking $v^{\star}$, the true source, uniformly on $G$, and compare the average performance of Algorithm \ref{algo:multi_algo}, a naive heuristic that simply uses the rumor centrality ($\mathsf{RC}$) \cite{who}, Jordan centrality ($\mathsf{JC}$) \cite{ZhuLei}, and a spectral based method called Dynamical Age ($\mathsf{DA}$) \cite{spectral_rumor}. To fairly compare these algorithms,  when Algorithm \ref{algo:multi_algo} yields a set with $\vert \kappa \vert$ vertices, then other methods find a set of $\vert \kappa \vert$ vertices having the top $\vert \kappa \vert$ maximum score for all $v$ of $G_n$. Obviously, the size of the solution set $\vert \kappa \vert$ depends on the topology of $G_n$ in each run of the simulation, and thus is not a constant in general over that thousand times. To quantify the performance of these two algorithms, let us define the error function of a vertex set $\eta$:
\begin{center}
$\mbox{error}(\eta)=min \lbrace d(v,v^{\star})\vert \forall v\in \eta\rbrace$.
\end{center}
This is the smallest number of hops between $v^{\star}$ and the nearest vertex in the set $\eta$. As shown in Table \ref{tab:compare_general}, we can observe that the number of vertices in $\kappa$ is surprisingly small, moreover, $\vert\kappa\vert$ is decreasing as $d_m$ increases. 

\begin{table}
\caption{Average error (in terms of number of hops) comparing Algorithm \ref{algo:multi_algo} to Top-$k$ version of Rumor Centrality ($\mathsf{RC}$), Jordan Center ($\mathsf{JC}$) and Dynamical Age ($\mathsf{DA}$), when $G$ is a $d$-regular graph, for $d=3,4,5,6$.}
\label{tab:compare}
\centering
\begin{tabular}{c|c|c|c|c|c}

 $d_m$& $\vert\kappa\vert$& Algo. \ref{algo:multi_algo} & $\mathsf{RC}$  & $\mathsf{JC}$ & $\mathsf{DA}$
\\
\hline
3 & 6.34 & \textbf{1.44} & 3.26 & 4.55 & 2.93
\\
\hline
4 &5.65& \textbf{1.50} & 2.64 & 3.52 & 2.18
\\
\hline
5 &4.05 & \textbf{1.48} & 2.36 & 3.18 & 2.05
\\
\hline
6 & 3.72& \textbf{1.40} & 2.32 & 2.94 & 2.00
\end{tabular}
\end{table} 

\begin{table}
\caption{Average error (in terms of number of hops) comparing Algorithm \ref{algo:multi_algo} to Top-$k$ version of Rumor Centrality ($\mathsf{RC}$), Jordan Center ($\mathsf{JC}$) and Dynamical Age ($\mathsf{DA}$),  when $G$ is a finite general tree with degree of each vertex less or equal to $d_m$, for $d_m=4,5,6,7$.}
\label{tab:compare_general}
\centering
\begin{tabular}{c|c|c|c|c|c}

 $d_m$& $\vert\kappa\vert$& Algo. \ref{algo:multi_algo} & $\mathsf{RC}$  & $\mathsf{JC}$ & $\mathsf{DA}$
\\
\hline
4 & 2.82 & 3.07  & 3.34 & 4.51 & \textbf{3.02}
\\
\hline
5 & 2.6  & \textbf{2.59} & 3.00  & 3.96  & \textbf{2.59}
\\
\hline
6 & 2.46 & \textbf{2.46}  & 2.85 & 3.87  & 2.51
\\
\hline
7 &2.45 & \textbf{2.33}  & 2.61  & 3.55  & 2.40

\end{tabular}
\end{table}

\color{black}

\section{Pseudo-Trees with a cycle}
\label{sec:cycle}
\begin{definition}
\label{def:pseudo_tree}
A pseudo-tree is a connected graph with equal number of vertices and edges, i.e., a tree plus an edge that creates a cycle.  
\end{definition}
In this section, we consider the special case where $G$ is a degree-regular graph, and $G_n$ has only a single cycle, i.e., $G_n$ is a pseudo-tree. We denote the cycle as $C_h$ where $h$ is the size (number of vertices on the cycle) of the cycle. Here, we call those vertices on $C_h$ \emph{cycle vertices}. Assume $v$ is a cycle vertex, then we define $t_{v}$ to be the subtree rooted at $v$ in $G_n$. Take Fig. \ref{fig:cycle_ex} for example, $t_{v_1}$ is the subtree that contains $v_1$, $v_4$ and $v_7$.  In this section, we study how a cycle affects the probability $P(v\vert G_n)$ when $G_n$ contains a cycle $C_h$. To generalize the analysis in \cite{who}, we should intuitively assume that the probability of being infected is proportional to the number of infected neighborhoods. With this assumption, the analysis in \cite{who} will not change, but we can consider the case that two infected vertices have a common susceptible neighborhood, i.e., there is a cycle in $G_n$.

\subsection{Impact of a Single Cycle On $P(G_n \vert v)$}

\begin{figure}
\begin{center}
\includegraphics[scale=0.3]{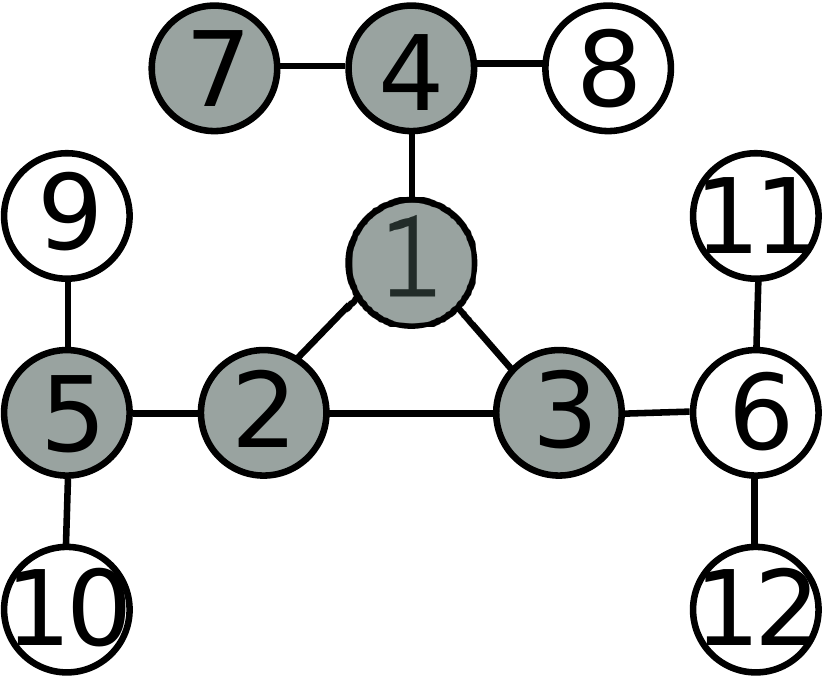}
\caption{$G_6$ is an infected subgraph with a single cycle $C_3$ containing three cycle vertices $v_1$, $v_2$ and $v_3$. We can partition $G_6$ into three subtrees say $t_{v_1}=\lbrace v_1,v_4,v_7\rbrace$, $t_{v_2}=\lbrace v_2,v_5\rbrace$, $t_{v_3}=\lbrace v_3\rbrace$.}
\label{fig:cycle_ex}    
\end{center}
\end{figure}

\begin{table}
\begin{center}
\caption{Numerical Example of $P(\sigma_i\vert G_6)$ using $G_6$ in Fig. \ref{fig:cycle_ex}}
\renewcommand\arraystretch{1.3}
\label{tab:cycle_ex}
\begin{tabular}{c|c|c}
$\sigma_i$ & Spreading Order & $P(\sigma_i\vert G_6)$
\\
\hline
$\sigma_1$ &  $v_4\rightarrow v_1 \rightarrow v_3 \rightarrow v_2 \rightarrow v_5 \rightarrow v_7$ & $\frac{2}{1200}$ 
\\
\hline
$\sigma_2$ & $v_4\rightarrow v_1 \rightarrow v_2 \rightarrow v_5 \rightarrow v_3 \rightarrow v_7$ & $\frac{2}{1800}$ 
\\
\hline
$\sigma_3$ & $v_4\rightarrow v_1 \rightarrow v_2 \rightarrow v_5 \rightarrow v_7 \rightarrow v_3$ & $\frac{2}{2520}$ 
\end{tabular}    
\end{center}
\end{table}

\begin{example}
\label{exam:cycle_af}
Consider the infected subgraph $G_6\subset G$ as shown in Fig. \ref{fig:cycle_ex}, where $G_6=\lbrace v_1, v_2, v_3, v_4, v_5, v_7\rbrace$ and there is a $3$-cycle in $G_6$. Consider a spreading order $\sigma_1\in M(v_4,G_6)$, where $\sigma_1: v_4\rightarrow v_1 \rightarrow v_2 \rightarrow v_3 \rightarrow v_5 \rightarrow v_7$. We have $P(\sigma_1\vert v_4)=(1/3)\cdot (1/4)\cdot (2/5)\cdot (1/4)\cdot (1/5)=2/1200$. Note that when $v_1$ and $v_2$ are infected, $v_3$ has two infected neighborhoods which implies that the probability of $v_3$ be infected in the next time period is twice higher than $v_5, v_7$ and $v_8$. In particular, there are three possible values for $P(\sigma_i\vert v_4)$ as shown in Table \ref{tab:cycle_ex}, for all $\sigma_i \in M(v_4,G_6)$. Moreover, we observe that the denominators are different in Table \ref{tab:cycle_ex}, due to sharing a common neighbor in the presence of a cycle. We call this property the \emph{cycle effect}.

\end{example}

Example \ref{exam:cycle_af} reveals some interesting properties of the cycle effect due to a single cycle:  
\begin{enumerate}
\item $P(\sigma_i \vert v)$ increases with how soon the last cycle vertex appears in $\sigma_i$ (as ordered from left to right of $\sigma_i$). For example, the last cycle vertex on $\sigma_1$ is $v_2$, and is $v_3$ on $\sigma_2$ and $\sigma_3$.
\item When there is a cycle in $G_n$, then $P(G_n\vert v)$ is no longer proportional to $\vert M(v,G_n)\vert$.
\item For each $\sigma_i$, there are actually two corresponding permitted spreading orders due to the cycle.
\end{enumerate}

The first property shows that $P(\sigma_i\vert v)$ is dependent on the position of the last cycle vertex in each spreading order. Note that the cycle effect is similar to the irregular effect, and the main difference lies in that all cycle vertices may cause the cycle effect instead of only one irregular vertex may cause the irregular effect. We proceed to compute $P(\sigma_i\vert v)$ as follows. For brevity of notation, let $v_l$ denote the last cycle vertex. 

\begin{definition}
\label{def:cycle_distance}
We let the distance from a vertex $v$ to the cycle $C_h$ denoted by $d(v,C_h)$ be defined by the minimum value of distances from $v$ to all cycle vertices on $C_h$. That is,  
\begin{center}
$d(v,C_h)=\min\limits_{v_i\in C_h} \{ d(v,v_i)\}$.
\end{center}
\end{definition}
 Take Fig. \ref{fig:cycle_ex} for example, let $C_3=\{v_1$, $v_2, v_3\}$, then  $d(v_7,C_3)=d(v_7,v_1)=2$ and $d(v_5,C_3)=1$.

Remark: For each $\sigma\in \vert M(v,G_n)\vert$, $v_l$ can be any vertex on the cycle $C_h$ except the vertex $v'$ with the distance $d(v,v')=d(v,C_h)$. Hence, there are $h-1$ possible $v_l$.

From previous observations, we have
\begin{equation}
\label{eq:cycle_mvgn}
\vert M(v,G_n)\vert=2\cdot\sum\limits^{n-t_{v_l}+1}_{k=d(v,C_h)+h} m^{v_l}_{v}(G_n,k)
\end{equation}
since the position of $v_l$ on the spreading order ranges from $d(v,C_h)+h$ to $n-t_{v_l}+1$. For example, in Table \ref{tab:cycle_ex}, we can see that $v_l=v_2$ is the $4$th element on $\sigma_1$ and $v_l=v_3$ is the $6$th element on $\sigma_3$ whence the order $4$ comes from $d(v_4,C_3)+h=1+3$, and the order 6 comes from $n-t_{v_l}+1=6-1+1$. Finally, the multiplication with $2$ is due to the third property.

Now, we can rewrite $P(G_n\vert v)$ for $G_n$ with a cycle as:
\begin{equation}
\label{eq:cycleP}
P(G_n\vert v)=\sum\limits^{n-t_{v_l}+1}_{k=d(v,C_h)+h} m^{v_l}_{v}(G_n,k)\cdot P^{v_l}_{v}(G_n,k),
\end{equation} 
and our goal is to find the vertex $\hat{v}$ that achieves 
\begin{equation}
P(G_n\vert \hat{v})=\max\limits_{v_i\in G_n}P(G_n\vert v_i).
\end{equation}
Since $P(G_n\vert v)$ is not proportional to $\vert M(v,G_n)\vert$, we should compute $P(G_n\vert v)$ by considering each part $m^{v_l}_{v}(G_n,k)$ and their corresponding probability $P^{v_l}_{v}(G_n,k)$. Let $z_d(i)=(i-1)(d-2)$, then
\begin{equation}
\label{eq:newcycleP}
P^{v_l}_{v}(G_n,k)=2\cdot\prod\limits^{k-1}_{i=1}\frac{1}{d+z_d(i)}\cdot\prod\limits^{n-2}_{i=k-1}\frac{1}{d+z_d(i)-1}.
\end{equation} 
The first factor in (\ref{eq:newcycleP}) is the probability that $k$ vertices are infected where the $k$th infected vertex is $v_l$, and the second factor is the probability of that all remaining $n-k$ vertices being infected thereafter. The $-1$ in the denominator of the second factor and the coefficient $2$ at the front are due to the common neighbor in a cycle. Note that multiplying by $2$ at the front makes no difference when computing $P(G_n\vert v)$ for each $v\in G_n$.  From (\ref{eq:cycleP}), we see that the number of spreading orders and the corresponding position of $v_l$ affect $P(G_n\vert v)$.

\subsection{Computing $\vert M(v,G_n)\vert$ for $v$ on $G_n$}
 In this section, we focus on computing $\vert M(v,G_n)\vert$. To compute $\vert M(v,G_n)\vert$, we can leverage the message-passing algorithm in \cite{who} if $G_n$ is a tree. Observe that for each infected vertex in $G_n$, it is infected by one of its infected neighbors (even if it has two infected neighbors), so the actual infecting route is a spanning tree of $G_n$ instead of a graph with cycle. Hence, the number of all spreading orders on a graph $G_n$ with a cycle can be computed as 

\begin{equation}
\label{eq:new_M}
\vert M(v,G_n)\vert=\displaystyle\sum_{1\leq i\leq h}\vert M(v,T_i)\vert ,
\end{equation} 
where $T_i$ is the spanning tree of $G_n$, for $i=1,2,...,h$. If $G_n$ contains a $C_h$, then the time complexity of computing $M(v,G_n)$ for $v\in G_n$ is $O(hn)$. Since $G_n$ has $h$ spanning trees, and for each spanning tree, we can use the message-passing algorithm in \cite{who} that has $O(n)$ time complexity.

\subsection{Epidemic Center on Uni-cyclic $G_n$}
In this section, we propose a theorem and a lemma to characterize the location of epidemic center in $G_n$. Instead of computing $\vert M(v,G_n)\vert$ for all $v\in G_n$ in each spanning tree, we leverage some analytical results to find $v_c$. Let $t_i=t_{v_i}$ be defined as above, and we slightly abuse the notation of the subtree size $\vert t_i \vert$ as $t_i$. 
\begin{theorem}
\label{thm:rough_locate_cycle}
Let $G$ be a degree regular graph, and $G_n$ be a subgraph of $G$ with a single cycle $C_h=\{v_1, v_2,...,v_h\}$. The epidemic center $v_c$ of $G_n$ satisfies one of the following condition:
\begin{enumerate}
    \item Each connected component of $G_n\backslash \lbrace v_c\rbrace$ is of size less or equal to $n/2$.
    \item $v_c$ is a cycle vertex and $t_i\leq n/2$ for $i=1,2,...,h$.
\end{enumerate}

\end{theorem}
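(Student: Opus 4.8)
## Proof Proposal for Theorem \ref{thm:rough_locate_cycle}

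The plan is to reduce everything to the classical tree-centroid argument by working with the $h$ spanning trees $T_1,\dots,T_h$ of the pseudo-tree $G_n$ and the decomposition $\vert M(v,G_n)\vert=\sum_{i=1}^{h}\vert M(v,T_i)\vert$ from (\ref{eq:new_M}). The two tools I will use repeatedly are: (i) the elementary ratio identity for a tree $T$ and adjacent vertices $u,v$, namely $\vert M(u,T)\vert/\vert M(v,T)\vert=\vert T_v(u)\vert/\vert T_u(v)\vert$ where $T_x(y)$ is the component of $y$ after deleting the edge $(x,y)$ (this is the computation behind rumor centrality in \cite{who,shah2}), together with its telescoped form along a path $x=w_0,\dots,w_m=y$; and (ii) the known equivalence of the rumor center with the tree centroid \cite{tan3,ma}. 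Since $v_c=\argmax_{v\in G_n}\vert M(v,G_n)\vert$, optimality means no neighbour of $v_c$ has strictly larger epidemic centrality, and the whole proof is an analysis of such one-step moves.

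\emph{Bridge moves, and the case $v_c\notin C_h$.} In a connected unicyclic graph the bridges are exactly the edges not on $C_h$. If $(u,v_c)$ is a bridge, deleting it splits $G_n$ into two pieces, and since $C_h$ is $2$-edge-connected it lies entirely within one of them; so the piece $W\ni u$ contains either all of $C_h$ or none of it, and in either case $W$ stays connected after deleting any one cycle edge. Hence the split into $W$ and its complement is identical in every spanning tree $T_i$, so $\vert M(u,T_i)\vert/\vert M(v_c,T_i)\vert=\vert W\vert/(n-\vert W\vert)$ for all $i$, and therefore $\vert M(u,G_n)\vert/\vert M(v_c,G_n)\vert=\vert W\vert/(n-\vert W\vert)$. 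Optimality of $v_c$ forces $\vert W\vert\le n/2$. When $v_c\notin C_h$ every edge at $v_c$ is a bridge, so every component of $G_n\setminus\{v_c\}$ has size at most $n/2$: condition 1 holds.

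\emph{The case $v_c=v_1\in C_h$.} Applying the bridge argument to the edges from $v_1$ to its tree-children shows each corresponding component of $G_n\setminus\{v_1\}$ has size $\le n/2$. The one remaining component is the ``cycle component'' of size $n-t_1$. If $t_1\ge n/2$ this is $\le n/2$, so condition 1 holds. Assume $t_1<n/2$; I must prove condition 2, i.e. $t_i\le n/2$ for every cycle vertex $v_i$. Suppose not: since $\sum_i t_i=n$ there is a \emph{unique} cycle vertex $v_j$ with $t_j>n/2$, and necessarily $j\ne 1$. Let $v_2$ be the cycle-neighbour of $v_1$ on the arc toward $v_j$, and $T^{\dagger}$ the spanning tree obtained by deleting $(v_1,v_2)$. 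The ratio identity gives $\vert M(v_2,T_i)\vert=\tfrac{a_i}{n-a_i}\vert M(v_1,T_i)\vert$ for $i\ne\dagger$, where $a_i$ is the total weight of the cycle-arc running from $v_2$ up to the edge deleted in $T_i$, while telescoping along the long path in $T^{\dagger}$ gives $\vert M(v_2,T^{\dagger})\vert=\big(\prod_{i\ne\dagger}\tfrac{a_i}{n-a_i}\big)\vert M(v_1,T^{\dagger})\vert$. Summing, $\vert M(v_2,G_n)\vert-\vert M(v_1,G_n)\vert$ is a combination of the numbers $\tfrac{a_i}{n-a_i}-1=\tfrac{2a_i-n}{n-a_i}$ weighted by $\vert M(v_1,T_i)\vert$, plus the $T^{\dagger}$ term; because $t_j>n/2$, once the deleted cycle edge sweeps past $v_j$ the weights $a_i$ exceed $n/2$ and those coefficients become positive, and I want to conclude the whole combination is strictly positive, contradicting $v_1=v_c$ — forcing condition 2.

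\emph{The main obstacle.} The delicate step is exactly this last sign estimate: the coefficients $\tfrac{2a_i-n}{n-a_i}$ change sign as the cut sweeps around $C_h$, so positivity is not term-by-term. The argument must exploit that $a\mapsto\tfrac{2a-n}{n-a}$ is increasing and that $i\mapsto\vert M(v_1,T_i)\vert$ is largest precisely for the spanning trees in which $v_1$ sits near the middle of the long path — which are the ones whose $a_i$ are large and positive — so a monotone-rearrangement (Chebyshev-type) pairing of the ``light-arc'' trees (negative coefficient, small $\vert M(v_1,T_i)\vert$) against the ``heavy-arc'' trees (positive coefficient, large $\vert M(v_1,T_i)\vert$) gives the strict inequality, with $t_j>n/2$ guaranteeing a nonempty block of strictly positive dominant terms. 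Packaging this estimate cleanly rather than by brute expansion of all $h$ terms, together with the easy boundary subcase $v_c=v_j$ (where the cycle component $n-t_j<n/2$ and the bridge argument on $v_j$'s tree-children again yield condition 1), is where the real work lies; everything else follows the tree-centroid template.
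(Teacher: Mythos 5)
Your bridge-move argument is sound and completely disposes of the case $v_c\notin C_h$: for a bridge $(u,v_c)$ the bipartition into $W$ and $G_n\setminus W$ is the same in every spanning tree $T_i$, so the adjacent-vertex ratio $\vert M(u,T_i)\vert/\vert M(v_c,T_i)\vert=\vert W\vert/(n-\vert W\vert)$ is constant in $i$, survives the summation (\ref{eq:new_M}), and optimality of $v_c$ forces $\vert W\vert\le n/2$. This is essentially the contrapositive of what the paper itself proves: the paper's argument only shows that a vertex $v$ all of whose components in $G_n\setminus\{v\}$ have size at most $n/2$ is the rumor center of every spanning tree $T_j$ (a violating neighbour $u$ in some $T_j$ would give $t^v_u>n/2$) and hence, via (\ref{eq:new_M}), the epidemic center of $G_n$; the paper does not argue the cycle-vertex half of the dichotomy at all.

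That second half is where your proposal has a genuine gap. For $v_c=v_1\in C_h$ with $t_1<n/2$ and a hypothetical $v_j$ with $t_j>n/2$, you reduce the desired contradiction to the strict positivity of the signed combination $\sum_{i\ne\dagger}\bigl(\tfrac{a_i}{n-a_i}-1\bigr)\vert M(v_1,T_i)\vert$ plus a $T^{\dagger}$ correction, and then assert that a Chebyshev-type rearrangement closes it because the positive coefficients attach to the large values of $\vert M(v_1,T_i)\vert$. But that claimed correlation between $a_i$ and $\vert M(v_1,T_i)\vert$ is itself an unproven statement about how the rumor centrality of a fixed vertex varies as the deleted cycle edge sweeps around $C_h$ (it depends on the whole profile $t_1,\dots,t_h$, not only on where $v_1$ sits in the resulting long path), and the $T^{\dagger}$ term, whose telescoped factor $\prod_{i\ne\dagger}\tfrac{a_i}{n-a_i}$ can be well below $1$, is not bounded. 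As written the decisive inequality is plausible but not proved, so the cycle-vertex case of the theorem remains open in your write-up. A shortcut worth trying: Proposition \ref{thm:cycle_ratio} yields closed-form ratios of $\vert M(v_i,G_n)\vert$ between adjacent cycle vertices purely in terms of the $t_i$'s (e.g.\ $t_1:t_2:t_3$ for $C_3$), which for a vertex with $t_j>n/2>t_1$ may give $\vert M(v_j,G_n)\vert>\vert M(v_1,G_n)\vert$ directly, with no rearrangement argument needed.
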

Remark: Note that $v$ being epidemic center of $G_n$ does not mean that each connected component of $G_n\backslash \lbrace v\rbrace$ is of size less or equal to $n/2$. (Had $G_n$ been a tree, then this is true \cite{who}.)  However, condition 1 is the sufficient condition of a vertex being the epidemic center even in a general graph.

\begin{figure}
\begin{center}
\includegraphics[scale=0.35]{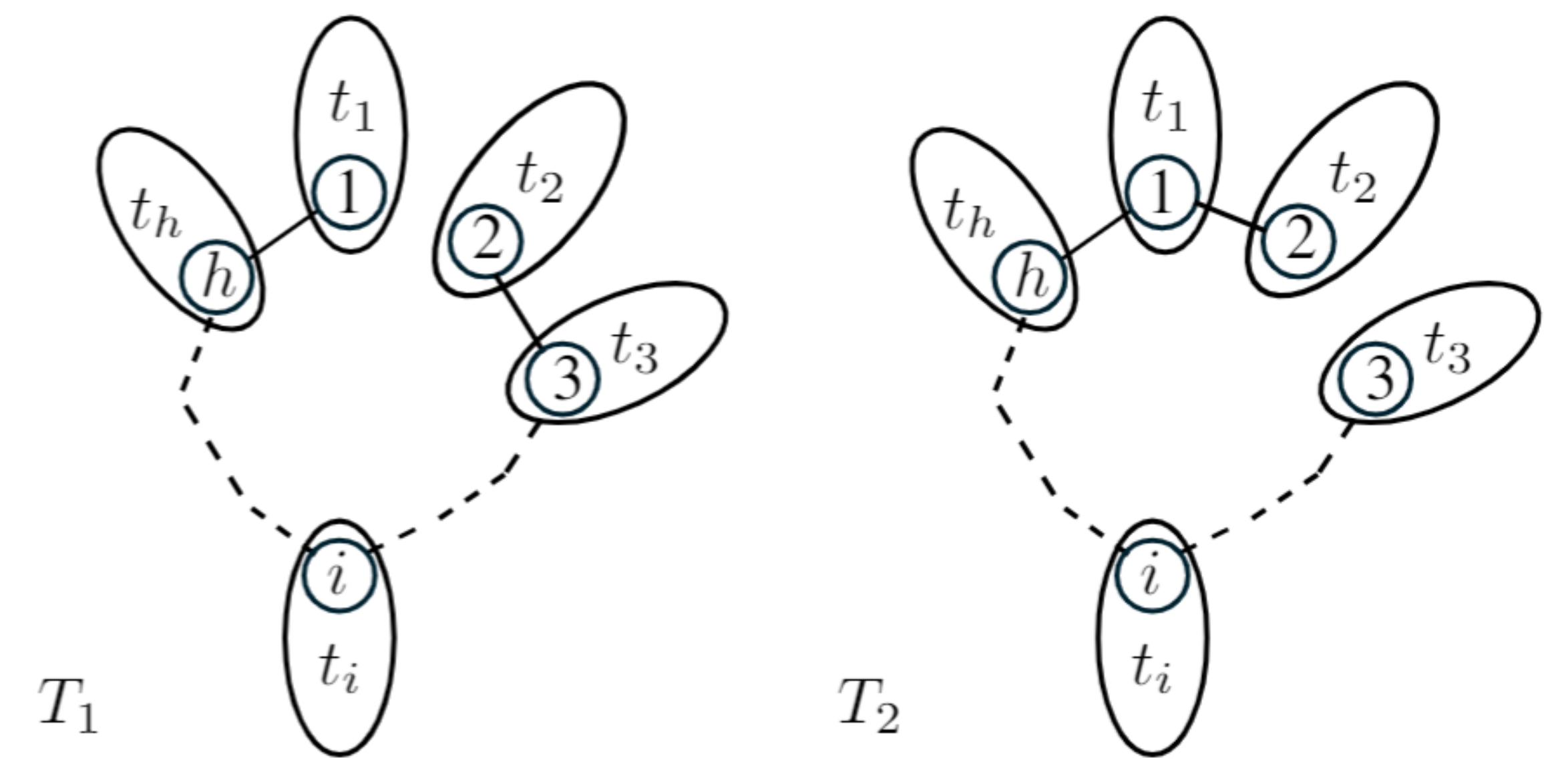}
  \caption{$C_h$ is constructed by $v_1,v_2,...,v_h$, and $t_i$ is a subtree rooted at $v_i$.}
\label{fig:spanning_tree_ratio}
\end{center}
\end{figure}

We can locate $v_c$ of $G_n$ by the first condition in Theorem \ref{thm:rough_locate_cycle}, however, if the first condition is not satisfied, then $v_c$ is on the cycle. In the following, we proceed to consider the case that if $v_c$ is a cycle vertex. Let $T_j$ denote the spanning tree of $G_n$ which is constructed by $G_n\backslash (v_j,v_{j+1})$, for $j=1,2,...,h-1$ and $T_h=G_n\backslash (v_h,v_1)$. Note that $(v_h,v_1)$ and $(v_j,v_j+1)$ for $j=1,2,...,h-1$ are cycle edges of $C_h$. 
\begin{proposition}
\label{thm:cycle_ratio}
Let $v_i$ be a cycle vertex and assume that $\vert M(v_i, T_p)\vert =r$, where $r$ and $p$ are integers and $1\leq p\leq h$. Then for $1\leq q\leq h$, we have
\begin{equation}
\label{eq:tree_ratio}
\frac{\vert M(v_i, T_q)\vert }{\vert M(v_i, T_p)\vert }= \frac{\prod\limits_{j\in C_h, j\neq i}T^i_{p,j}}{\prod\limits_{k\in C_h, k\neq i}T^i_{q,k}}, 
\end{equation} 
where $T^i_{p,j}$ is the subtree $T^i_{j}$ of the spanning tree $T_p$.
\end{proposition}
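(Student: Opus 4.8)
The plan is to unwind the definition of $\vert M(v_i, T_q)\vert$ via the product formula for linear extensions of a rooted tree. Recall from \eqref{eq:new_M} and the analysis in \cite{who} that when $G_n$ is a tree rooted at a vertex $v_i$, the number of spreading orders $\vert M(v_i, T)\vert$ equals $n!$ divided by the product of subtree sizes over all vertices of $T$, i.e. $\vert M(v_i, T)\vert = n! / \prod_{u \in T} \vert t^{v_i}_u \vert$, where $t^{v_i}_u$ is the subtree hanging below $u$ when $T$ is rooted at $v_i$. The key observation is that as $p$ ranges over $1,2,\dots,h$, the spanning trees $T_p$ obtained by deleting one cycle edge differ only in how the cycle vertices (other than $v_i$) are arranged along the two arcs of the cycle emanating from $v_i$; the subtrees hanging off the cycle — the pieces $t_j$ for $j \in C_h$ — are unchanged as rooted subtrees, only their position in the ancestry chain changes.

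First I would fix $v_i$ as the root and write out $\vert M(v_i, T_p)\vert = n!/\prod_{u\in T_p}\vert t^{v_i}_{p,u}\vert$. Then I would split the product over $u$ into two groups: vertices $u$ lying strictly inside one of the hanging subtrees $t_j$ (for $j \in C_h$, $j \neq i$), and vertices $u$ lying on the cycle arcs. For a vertex $u$ strictly inside $t_j$, its subtree $t^{v_i}_{p,u}$ is the same set regardless of which cycle edge was deleted — deleting a cycle edge elsewhere does not alter the descendants of $u$ — so those factors are identical in $T_p$ and $T_q$ and cancel in the ratio. What remains in the ratio is exactly the product, over the cycle vertices $j \neq i$, of the subtree sizes $\vert t^{v_i}_{p,j}\vert = T^i_{p,j}$ rooted at those cycle vertices. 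This is because, when $T_p$ is rooted at $v_i$, the subtree below a cycle vertex $j$ consists of $t_j$ itself plus all the cycle vertices further along the arc from $v_i$ together with their hanging subtrees; this accumulated size is precisely what changes when we re-root, i.e. when we switch from $T_p$ to $T_q$. Taking the ratio $\vert M(v_i, T_q)\vert / \vert M(v_i, T_p)\vert$, the $n!$ cancels, the interior-subtree factors cancel, and we are left with $\prod_{j} T^i_{p,j} \big/ \prod_{k} T^i_{q,k}$, which is the claimed identity \eqref{eq:tree_ratio}.

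The step I expect to be the main obstacle is making the bookkeeping of "subtree below a cycle vertex $j$ when rooted at $v_i$" fully precise, since for a given $T_p$ there are two arcs of the cycle leaving $v_i$, and a cycle vertex $j$ sits on exactly one of them; its subtree $T^i_{p,j}$ is $t_j$ plus everything on the far side of $j$ along that arc. I would handle this by introducing clean notation for the two arcs determined by the deleted edge in $T_p$ and arguing that, for each cycle vertex $j \neq i$, the multiset $\{T^i_{p,j} : j\}$ appearing on the right of \eqref{eq:tree_ratio} is exactly the multiset of accumulated subtree sizes along the arcs — and that this is the only place the two spanning trees differ. Once that correspondence is nailed down, the cancellation is purely formal. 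A secondary point worth a sentence is to confirm the $n!$ numerators agree (they do, since $\vert T_p \vert = \vert T_q \vert = n$), so they drop out cleanly and the ratio depends only on the cycle-vertex subtree sizes.
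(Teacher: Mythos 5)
Your argument is correct, and it supplies the justification the paper itself leaves implicit: the paper states Proposition~\ref{thm:cycle_ratio} without an explicit proof in the appendix, relying only on the subtree-size ratio formula of \cite{who}, which is exactly the product formula $\vert M(v,T)\vert = n!/\prod_{u\in T}\vert t^{v}_{u}\vert$ that you invoke. Your cancellation argument --- the factors from vertices interior to the hanging subtrees $t_j$ and the root factor $n$ are identical in $T_p$ and $T_q$, so only the accumulated arc sizes $T^i_{p,j}$ and $T^i_{q,k}$ at the cycle vertices survive in the ratio --- is the intended reasoning and checks out against the paper's $C_3$ example, e.g.\ $\vert M(v_1,T_1)\vert/\vert M(v_1,T_2)\vert = t_3/(t_2+t_3)$.
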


The ratio of $\vert M(v_i,T_p)\vert /\vert M(v_j,T_p)\vert$ is proportional to their branch size in $T_p$ if $v_i$ and $v_j$ are adjacent. Now, for the same vertex $v_i$, but in different spanning tree say $T_p$ and $T_x$, we can also derive the ratio $\vert M(v_i,T_p)\vert /\vert M(v_i,T_q)\vert$ from (\ref{eq:tree_ratio}).
 Hence, if we assume $\vert M(v_1, T_1)\vert =r$, then we can derive $\vert M(v, T)\vert$ for all $v\in C_h$ and $T$ is a spanning tree of $G_n$ in terms of $r$ and $t_i$, where $t_i$ is the subtree size as shown in Fig. \ref{fig:spanning_tree_ratio}. The time complexity to find $v_c$ is $O(n+h^2)$ in the worst case, where $h$ is the size of the cycle in $G_n$.

\begin{example}
Take $C_3$ for example, $\vert M(v_1,T_1)\vert /\vert M(v_1,T_2)\vert=t_3/(t_2+t_3)$. Assume $\vert M(v_1,T_1)\vert =r$, by (\ref{eq:tree_ratio}) we can conclude that
\begin{center}
$\vert M(v_1,G_n)\vert = \frac{2(t_2+t_3)}{t_3}r$; \\
$\vert M(v_2,G_n)\vert = \frac{2t_2(t_2+t_3)}{t_1t_3}r$; \\
$\vert M(v_3,G_n)\vert = \frac{2(t_2+t_3)}{t_1}r$, 
\end{center}
which implies 
\begin{center}
$\vert M(v_1,G_n)\vert : \vert M(v_2,G_n)\vert : \vert M(v_3,G_n)\vert = t_1 : t_2 : t_3.$
\end{center}
By Theorem \ref{thm:rough_locate_cycle} and Proposition \ref{thm:cycle_ratio}, we  conclude that if $G_n$ contains a $C_3$, then $v_c$ of $G_n$ is either a vertex that satisfies the first condition or a vertex $v_i$ on $C_3$ with $t_i=\max\limits_{1\leq j \leq 3} {t_j}$.
\end{example}

Lastly, we combine Lemma \ref{lem:key_lem} and Theorem \ref{thm:rough_locate_cycle} to characterize the location of the ML estimator on regular pseudo-tree. 

\begin{theorem}
\label{thm:MLE_cycle}
Let $G$ and $G_n$ be defined as in Theorem \ref{thm:rough_locate_cycle}. The optimal solution to (\ref{main_problem}) is either on the path from the epidemic center of $G_n$ to the cycle or on the cycle.
\end{theorem}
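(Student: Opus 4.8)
The plan is to combine Lemma~\ref{lem:key_lem} with Theorem~\ref{thm:rough_locate_cycle}, treating the cycle $C_h$ in the role played by the irregular vertex $v_{ir}$ in Section~\ref{sec:finite}, and the cycle-distance $d(\cdot,C_h)$ of Definition~\ref{def:cycle_distance} in the role of $d(\cdot,v_{ir})$. Write $v_c$ for the epidemic center of $G_n$, let $v'$ be the cycle vertex with $d(v_c,v')=d(v_c,C_h)$, and let $\pi$ be the path from $v_c$ to $v'$; the goal is to show that any maximizer $\hat v$ of (\ref{main_problem}) lies in $\pi\cup C_h$, which is precisely ``on the path from the epidemic center to the cycle, or on the cycle.''

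\emph{Step 1: reduce the likelihood comparison to partial sums.} By (\ref{eq:cycleP}), $P(G_n\vert v)=\sum_k m^{v_l}_v(G_n,k)\,P^{v_l}_v(G_n,k)$, and by (\ref{eq:newcycleP}) the weight $P^{v_l}_v(G_n,k)$ is independent of $v$ and strictly decreasing in $k$, since the ratio of consecutive terms equals $\frac{d+z_d(k-1)-1}{d+z_d(k)}<1$. Abel summation then yields: if $\sum_{i\le k}m^{v_l}_{v_a}(G_n,i)\ge\sum_{i\le k}m^{v_l}_{v_b}(G_n,i)$ for all $k$ (extending $m^{v_l}_v(G_n,\cdot)$ by $0$ outside its natural range), then $P(G_n\vert v_a)\ge P(G_n\vert v_b)$, and strictly so whenever the partial-sum inequality is strict for some $k$.

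\emph{Step 2: a cyclic analogue of Lemma~\ref{lem:key_lem}.} Using the spanning-tree decomposition (\ref{eq:new_M}), $M(v,G_n)$ splits over the $h$ spanning trees $T_1,\dots,T_h$ of $G_n$, and correspondingly $m^{v_l}_v(G_n,k)$ splits, up to the harmless factor $2$ in (\ref{eq:cycle_mvgn}), into contributions from each $T_i$, within which the cycle vertex nearest $v$ plays exactly the role of the irregular vertex. Applying Lemma~\ref{lem:key_lem} tree by tree gives: if $\vert M(v_a,G_n)\vert\ge\vert M(v_b,G_n)\vert$ and $d(v_a,C_h)<d(v_b,C_h)$ (or $\vert M(v_a,G_n)\vert>\vert M(v_b,G_n)\vert$ and $d(v_a,C_h)\le d(v_b,C_h)$), then the partial sums of $m^{v_l}_{v_a}$ dominate those of $m^{v_l}_{v_b}$, hence $P(G_n\vert v_a)\ge P(G_n\vert v_b)$ by Step~1. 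Then project onto $\pi\cup C_h$: since $G_n$ is a pseudo-tree and $\pi\cup C_h$ is connected and contains the unique cycle, $G_n\setminus(\pi\cup C_h)$ is a forest whose components are subtrees each attached to $\pi\cup C_h$ at a single vertex (a second attachment point would create a second cycle). Hence every $w\notin\pi\cup C_h$ has a unique anchor $u(w)\in\pi\cup C_h$ through which all paths from $w$ into $\pi\cup C_h$ pass; in particular $u(w)$ lies on the $w$--$v_c$ path and on the $w$--$C_h$ path, so $d(u(w),C_h)<d(w,C_h)$, and $\vert M(u(w),G_n)\vert\ge\vert M(w,G_n)\vert$ because the epidemic centrality is nondecreasing along edges toward $v_c$ (each such step leaves behind a component of size at most $n/2$ by Theorem~\ref{thm:rough_locate_cycle}, which forces the centrality up, spanning tree by spanning tree via (\ref{eq:new_M})). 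The cyclic Lemma~\ref{lem:key_lem} then gives $P(G_n\vert u(w))\ge P(G_n\vert w)$ for every $w\notin\pi\cup C_h$, so $\max_{v\in G_n}P(G_n\vert v)$ is attained inside $\pi\cup C_h$, which is the claim; the boundary case $v_c\in C_h$ ($\pi=\{v_c\}$) is covered identically.

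The main obstacle is Step~2: Lemma~\ref{lem:key_lem} is stated for a \emph{tree} with a \emph{single} irregular vertex, while here the cycle effect is generated jointly by all $h$ cycle vertices and $M(v,G_n)$ is distributed over $h$ spanning trees, so the delicate part is keeping track of which cycle vertex serves as ``$v_l$'' in each spanning-tree component and aligning the summation ranges $[\,d(v,C_h)+h,\ n-t_{v_l}+1\,]$ for $v_a$ and $v_b$. The monotonicity of the epidemic centrality toward $v_c$ on a graph with a cycle, used in the projection step, is a secondary point requiring its own short verification rather than being an immediate restatement of the corresponding tree fact.
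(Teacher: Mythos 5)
Your proposal takes exactly the route the paper intends: the paper offers no written proof of Theorem \ref{thm:MLE_cycle} beyond the remark that it is ``a combination of Theorem \ref{thm:rough_locate_cycle} and \ref{thm:keyfinite},'' and your plan --- locate $v_c$ via Theorem \ref{thm:rough_locate_cycle}, then push every off-path vertex toward its anchor on $\pi\cup C_h$ using a cycle version of Lemma \ref{lem:key_lem} together with the Abel-summation reduction from the proof of Theorem \ref{thm:keyfinite} --- is precisely that combination, spelled out in considerably more detail than the paper provides. Your Step 1 is correct (the weights $P^{v_l}_v(G_n,k)$ in (\ref{eq:newcycleP}) are indeed positive and decreasing in $k$, so partial-sum domination of the $m$'s suffices), and your projection step is sound: since the comparison is always between a vertex $w$ and its neighbor $u(w)$ one step closer to $\pi\cup C_h$, the subtree hanging below $w$ is the same in every spanning tree $T_i$ and has size at most $n/2$ by Theorem \ref{thm:rough_locate_cycle}, so $\vert M(u(w),T_i)\vert\geq\vert M(w,T_i)\vert$ holds tree by tree and sums via (\ref{eq:new_M}). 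This edge-by-edge comparison also sidesteps the worry that a global inequality $\vert M(v_a,G_n)\vert\geq\vert M(v_b,G_n)\vert$ need not split over spanning trees.

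The one genuine gap is the one you flag yourself: the ``cyclic analogue of Lemma \ref{lem:key_lem}'' is asserted, not proved, and Lemma \ref{lem:key_lem} does not transfer verbatim. Its proof rests on the explicit product-of-binomials form of $m^{v_{ir}}_v(G_n,i)$, which counts orders in which a \emph{single designated} vertex sits at position $i$; the quantity $m^{v_l}_v(G_n,k)$ in (\ref{eq:cycleP}) instead records the position of the \emph{last} of $h$ cycle vertices, a maximum of several arrival positions whose identity varies with the spreading order and with the spanning tree (in $T_i$ the last cycle vertex can only be $v_i$ or $v_{i+1}$). Establishing that the partial sums of $m^{v_l}_{u(w)}$ dominate those of $m^{v_l}_{w}$ therefore requires a separate combinatorial argument about this ``last arrival'' statistic, and neither your sketch nor the paper supplies it. So your proposal is not weaker than the paper's own treatment --- it is strictly more explicit --- but the theorem cannot be considered fully proved by either until that lemma is worked out.
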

Remark: Theorem \ref{thm:MLE_cycle} is a combination of Theorem \ref{thm:rough_locate_cycle} and \ref{thm:keyfinite}. Besides, Theorem \ref{thm:MLE_cycle} generalizes the results in \cite{itw2017peter,asonam2017peter}.

\section{Algorithm for Finite Degree Regular Graph with Cycles}
This section proposes a novel distance-based algorithm to solve the epidemic source detection problem on a finite degree regular graph with cycles. From Theorem \ref{thm:keyfinite}, we can deduce that the likelihood of a vertex is greater if its distance to those irregular vertices and cycles is smaller. Hence, the maximum likelihood estimator should lie on the smallest induced subgraph containing three specific vertices: the epidemic center, the vertex closest to all cycles, and the vertex closest to all irregular vertices. Note that a degree on irregular vertex can be treated mathematically as a size one cycle. Hence we combine the irregular effect and the cycle effect in our algorithm. 
\begin{definition}
We say is a cycle is a \textit{minimum cycle} if there is no path between any two non-consecutive cycle vertices except the path along the cycle. 
\end{definition}
Since a vertex $v$ can be contained in multiple different-size cycles, we only take the minimum cycle that contains $v$ into consideration in our algorithm. Let $\mathcal{C}(v)$ denote the size of the minimum cycle containing $v$. If $v$ is not in any cycle and $deg(v)>1$, then we set $\mathcal{C}(v) =\infty$, otherwise  $\mathcal{C}(v)=1$. Note that when $deg(v)=1$, $v$ is regarded as a size $1$ cycle.

Base on Theorem 2,4, we heuristically define the weight $w_v$ of a vertex $v$ as 
\begin{equation}
w_v=\frac{\mathcal{C}(v)}{\mathcal{C}(v) +1}.
\end{equation}
Since we define the distance center to be the vertex with minimum distance centrality (cf. Definition \ref{def:distance}), we can design a weight such that the location of the ML estimator tends to be close to vertices with ``small weights''. This is also motivated by the fact that the likelihood of a vertex $v$ being the source is greater if $v$ has a larger epidemic centrality and is closer to those irregular vertices or cycles (cf. Theorem \ref{thm:keyfinite} and Theorem \ref{thm:MLE_cycle}). The definition of $\mathsf{SDC}(v,G_n)$ is a distance-based centrality. Furthermore, the definition of $w_v$ reveals that the irregular effect caused by a small cycle is greater than that caused by a large cycle which can be observed from Table \ref{tab:cycle_ex} and (\ref{eq:cycleP}). This definition implies that a vertex within a smaller cycle has a smaller weight which contributes ``more'' to $\mathsf{SDC}(v,G)$ while a vertex not in any cycle has weight $1$ which contributes ``less'' to $\mathsf{SDC}(v,G)$. Fig. \ref{fig:Weight_EX} illustrates such an example of a regular graph $G$ and the infected subgraph $G_{10}$ containing two different-sizes cycles, say $C_3$ and $C_4$. Note that, $v_{10}$ and $v_8$ have the same epidemic centrality, however, we have $P(G_{10}\vert v_{10})>P(G_{10}\vert v_8)$ since $v_{10}$ is closer to the smaller cycle than $v_8$.

\begin{figure}
\begin{center}
\includegraphics[scale=0.35]{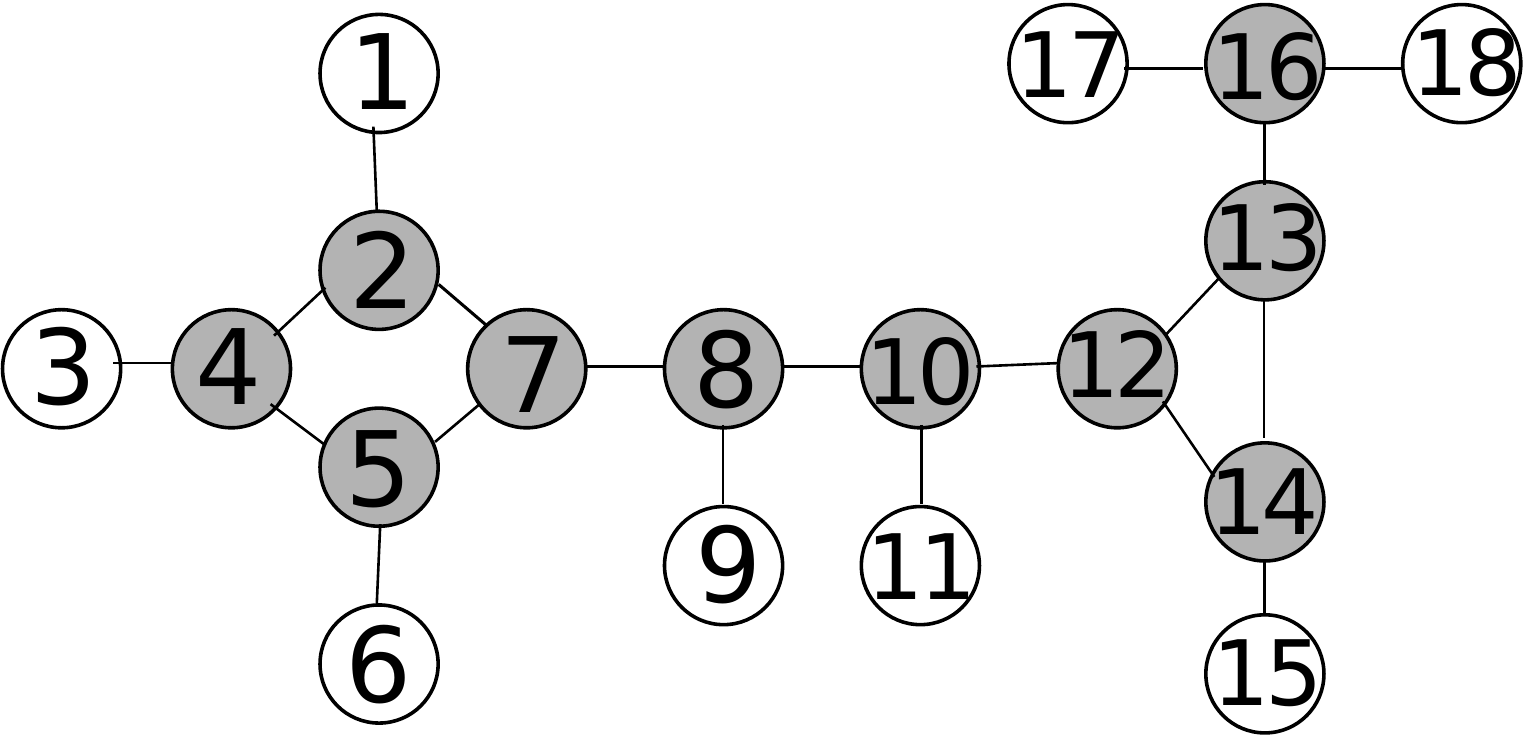}
  \caption{The infected subgraph $G_{10}$ contains ten nodes colored in grey. The epidemic centralities of $v_8$ and $v_{10}$ are the same, however, $v_{10}$ is the MLE of the true source in $G_{10}$. Since the irregular effect caused by a small cycle is greater than that of large cycle.}
\label{fig:Weight_EX}
\end{center}
\end{figure}

\begin{definition}
Given a $d$-regular graph $G$ and vertex $v$ of $G$, we define the \textit{statistical distance centrality} of $v$, $\mathsf{SDC}(v,G)$ as the summation of the weighted distance from $v$ to all other vertices in $G$. Hence, the statistical distance centrality of $v$ in $G$ is defined by 
\begin{equation}
\mathsf{SDC}(v,G_n)=\sum\limits_{ u \in G_n} w_u\cdot d(v,u).    
\end{equation}
The vertex $v_s$ with the minimum value for $\mathsf{SDC}(v,G)$ is called the \textit{statistical distance center}. 
\end{definition}

\begin{algorithm}[htb]
\caption{Statistical Distance-based Contact Tracing (SCT)}
\label{algo:master_algo}
\begin{algorithmic}
\Require 
$G_n$
\State Step 1: For each vertex $v$, compute the size $\mathcal{C}(v)$ of the minimum cycle containing $v$, and set $w_v=\frac{\mathcal{C}(v)}{\mathcal{C}(v) +1}$.
\State Step 2: For each vertex $v$, compute  $\mathsf{SDC}(v,G_n)$.
\State Step 3: Let $\hat{v}=\argmin\limits_{v\in G} \mathsf{SDC}(v,G)$.
\end{algorithmic}
\end{algorithm}

Algorithm \ref{algo:master_algo} is based on the idea of message passing. Let $lv(v)$ denote the level of $v$ in a $\mathsf{BFS}$ tree. In Step 2, for a given root $v_r$, we start a message-passing procedure in a $\mathsf{BFS}$ traversal to send a downward message containing level information to other vertices in the $\mathsf{BFS}$ tree. Upon receipt of this information, each leaf $v_l$ sends back an upward message containing $w_{v_l}\cdot lv(v_l)$ to its parent. Each internal vertex $v_{in}$ sends an upward message, containing the summation of all message from its children plus $w_{v_{in}}\cdot lv(v_{in})$, to its parent.   

In the following, we provide a time complexity analysis of Algorithm \ref{algo:master_algo}. For Step 1, the worst case time  complexity is $O(\vert \mathcal{C}_{min}\vert \cdot \vert E(G_n)\vert)$ \cite{Uno}, where $\vert\mathcal{C}_{min}\vert$ is the number of all minimum cycles in $G_n$. Since $G$ is a $d$-regular graph, each vertex in $G_n$ is contained in at most $d$ minimum cycles which implies $\mathcal{C}_{min} \leq d\cdot n$. The worst case time complexity for the Step 2 in the Algorithm \ref{algo:master_algo} is $O(n^3)$, since the $\mathsf{BFS}$ traversal for each vertex takes $O(n+\vert E(G_n)\vert)$. Hence, the worst case time complexity of Algorithm \ref{algo:master_algo} is $O(d\cdot n^3)$. 
In comparison with the $\mathsf{BFS}$ heuristic approach in \cite{who}, it applies the $\mathsf{BFS}$ traversal for each vertex and compute their epidemic centrality  which ends up with worst time complexity $O(n^3)$. 

\subsection{Experimental Performance Evaluation} 
We provide simulation results on different finite graphs with cycles, the first two simulations are conducted on synthetic graphs such as finite size grid graph and \textit{circulant graphs}. Both synthetic graphs are regular graphs with cycles except those vertices on the boundary of the grid graph. In synthetic graphs, we first simulate the virus spreading in a given network based on the model described in Section \ref{infinite}, to construct the infected subgraph $G_n$. Then, we apply Algorithm $\mathsf{SCT}$ to compute the source estimator. 

We conduct the other four experiments on real-world SARS-CoV2003 and COVID-19 contact tracing networks in Singapore and Taiwan. If we can identify the connection between any two confirmed cases in real-world contact tracing networks, we denote the connection (or contact) as an edge. However, when the number of confirmed cases is too large to record details of contact information, we can only have information about the geographical footprint for some confirmed cases. In this situation, we also denote those visited places as vertices, and we add an edge between a confirmed case and a place if the confirmed case had visited the place.

Since $G$ is unknown in practice and contact tracing networks are infected subgraphs $G_n$, we assume that $G$ is a regular graph with a few irregular vertices, and apply Algorithm $\mathsf{SCT}$ to the contact tracing networks to compute the source estimator. We use the graph distance from the actual source to the estimator to evaluate its performance.

\subsubsection{Grid Graph}
Disease spreading on grid graph is often considered under different spreading rules and models \cite{BB,balogh}. Hence, we select grid graph to be one of the testing synthetic networks. Simulation results are in Table \ref{tab:compare_grid} and one of the error distributions is in Fig. \ref{fig:compare_grid_10000-150}.  We can observe from Table \ref{tab:compare_grid} that the statistical distance based algorithm outperforms the $\mathsf{BFS}$ heuristic in \cite{who}. Moreover, the average error is increasing as the number of irregular vertices is increasing which again reveals the fact that the likelihood is evened out to those irregular vertices. 

\begin{figure}
\begin{center}
\includegraphics[scale=0.45]{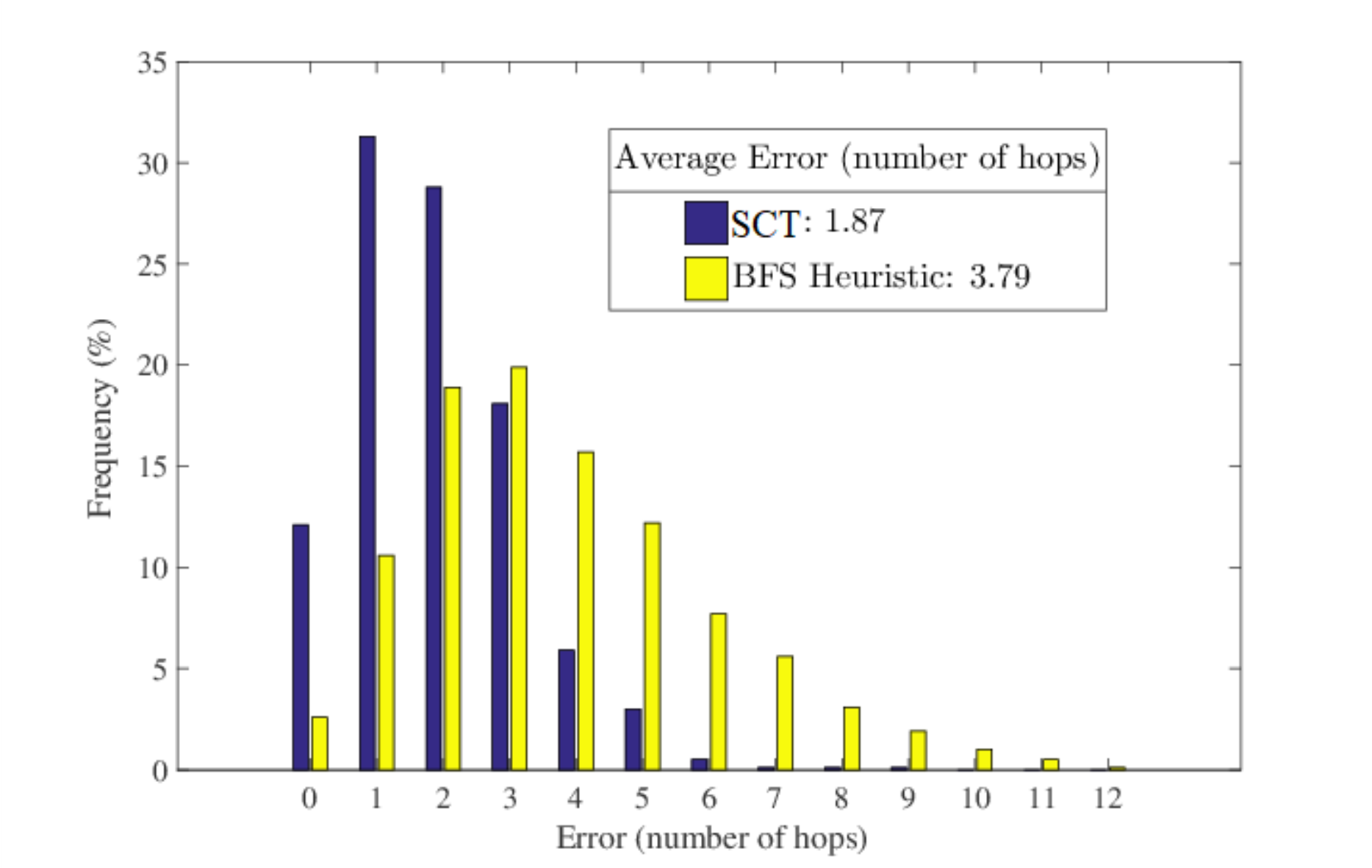}
\caption{Comparing the error distribution (in the number of hops) between Algorithm $\mathsf{SCT}$ and the $\mathsf{BFS}$ heuristic \cite{who} in a finite grid graph with $\vert G\vert =10000$ and $\vert G_n\vert =150$. In particular, the rate of the correct detection, i.e., $error=0$, is $12.1\%$ for Algorithm $\mathsf{SCT}$ and $2.6\%$ for the $\mathsf{BFS}$ heuristic.}
\label{fig:compare_grid_10000-150}
\end{center}
\end{figure}

\begin{table}
\caption{Average error (in terms of number of hops) comparing Algorithm $\mathsf{SCT}$ and $\mathsf{BFS}$ heuristic in \cite{who} when $G$ is a $100\times 100$ grid graph with different size of $G_n$.}
\label{tab:compare_grid}
\centering
\begin{tabular}{c|c|c|c|c}
 n &$\vert C_4\vert$& $\vert v_{ir} \vert$ & $\mathsf{SCT}$ & $\mathsf{BFS}$ heuristic \cite{who}
\\
\hline
150 &85.5 & 3.5 & 1.87 & 3.79 
\\
\hline
300& 199.5 & 7.5 & 2.33 & 6.11 
\\
\hline
500&364.2 &12.2 &3.14 & 8.37 
\\
\hline
800& 625.0 & 19.1 & 4.23 & 11.71 
\end{tabular}
\end{table} 

\subsubsection{Circulant Graph}
A circulant graph $G=(N; \mathsf{S})$ is a class of graphs which can be defined by its vertex set $V(G)=\mathbb{Z}$ and a set $\mathsf{S}$. The edge set is defined by $E(G)=\{ (v_i,v_j)\vert \text{ if }\vert i-j\vert \in \mathsf{S}\}$. Hence, if $\mathsf{S}$ is a subset of $[1, N/2)$ then $G$ is $(2\vert \mathsf{S}\vert)$–regular. Note that a circulant graph $G$ is connected if and only if $\mathsf{S}$ generates the integer group $Z_N$ and we only consider the connected circulant graph. In the simulation, we fix the size of $N$ and $\mathsf{S}$ and randomly choose integers from the interval $[1,N/2)$ to form the set $\mathsf{S}$. 

\begin{figure}
\begin{center}
\includegraphics[scale=0.45]{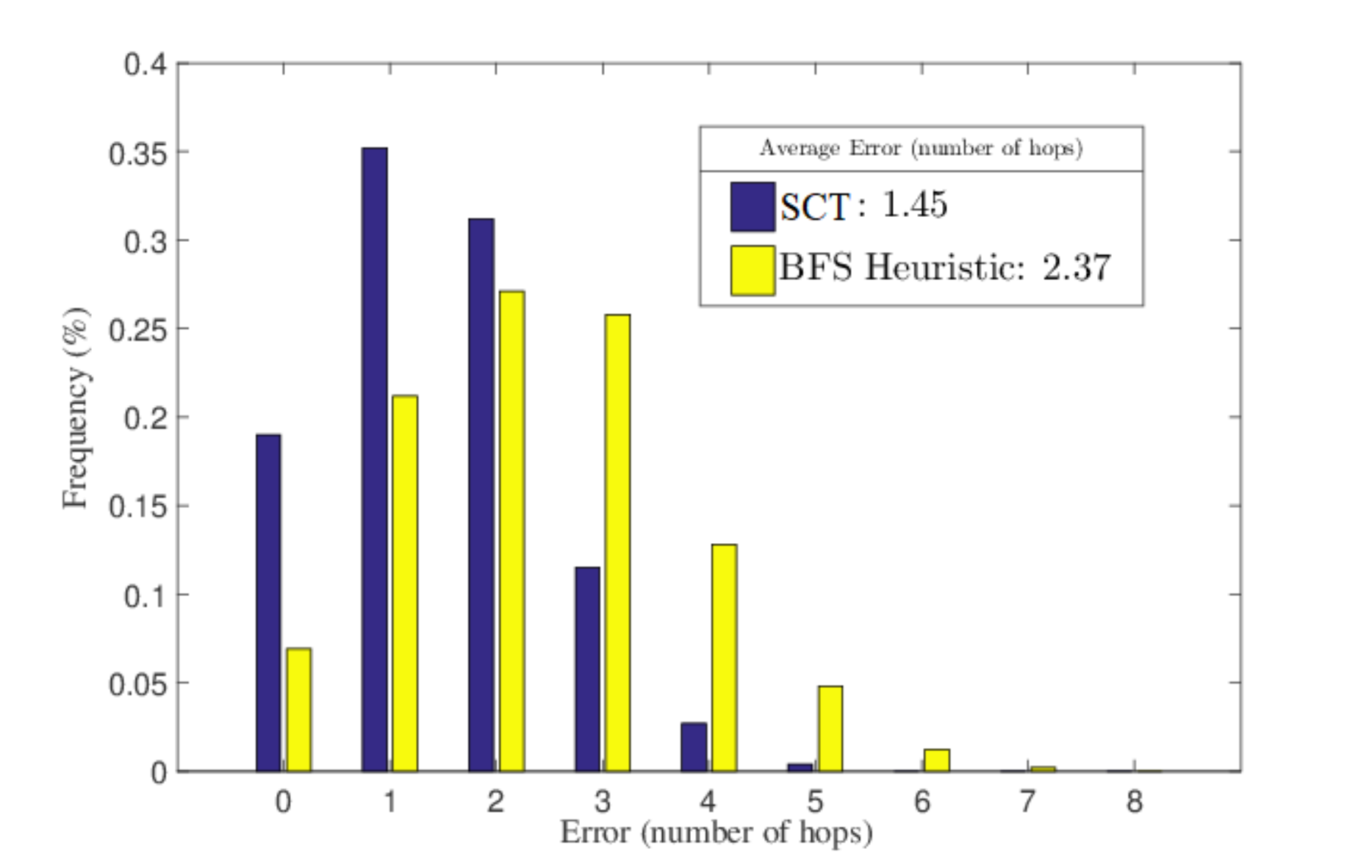}
\caption{Comparing the error distribution (in the number of hops) between Algorithm $\mathsf{SCT}$ and the $\mathsf{BFS}$ heuristic in \cite{who} on a circulant graph with $\vert G\vert =6000$ and $\vert G_n\vert =400$. In particular, the rate of the correct detection, i.e., $error=0$, is $19.1\%$ for Algorithm $\mathsf{SCT}$ and $7.0\%$ for the $\mathsf{BFS}$ heuristic.}
\label{fig:compare_circulant_6000-400}
\end{center}
\end{figure}

\begin{table}
\caption{Average error (in terms of number of hops) comparing Algorithm $\mathsf{SCT}$ and $\mathsf{BFS}$ heuristic in \cite{who} when $G$ is a random $d$-regular circulant graph with $|G|=6000$ and $\vert G_n \vert =400$.}
\label{tab:compare_circulant}
\centering
\begin{tabular}{c|c|c|c}
$d$   & $\#$(cycles)& $\mathsf{SCT}$ & $\mathsf{BFS}$ heuristic \cite{who}
\\
\hline
6 & 255.6 & 1.67 & 2.75
\\
\hline
 8 &  198.0& 1.45 & 2.37 
\\
\hline
10 & 167.1 & 1.33 & 2.07 
\\
\hline
12 & 147.7 & 1.24 & 1.97 
\end{tabular}
\end{table} 
\subsubsection{SARS-CoV2003 Contact Tracing Network in Taiwan}
We reconstruct the contact tracing network data of SARS-CoV2003 Taiwan from a graph, which indicates Potential bridges among hospitals and households, in \cite{ychen}. In the original data, there are four types of nodes which represent the confirmed case, suspected case, hospital, and area, respectively. Since cities or countries provide no information for personal contact, we delete all area nodes from the original data. In addition, we also delete all the nodes that represent suspected cases. We apply Algorithm $\mathsf{SCT}$ on this infected network and correctly identify the first place, Taipei Municipal Heping Hospital (now Taipei City Hospital Heping Branch), of cluster infection in April 2003 in Taiwan. In addition, the $\mathsf{BFS}$ heuristic approach chooses the red vertex, which represents a confirmed case (not the first case) who had been to Taipei Municipal Heping Hospital. The network graph is shown in Fig. \ref{fig:sars_tw}, and the orange vertex is the statistical distance center representing the Taipei Municipal Heping Hospital. 

\begin{figure}
    \centering
    \includegraphics[scale=0.4]{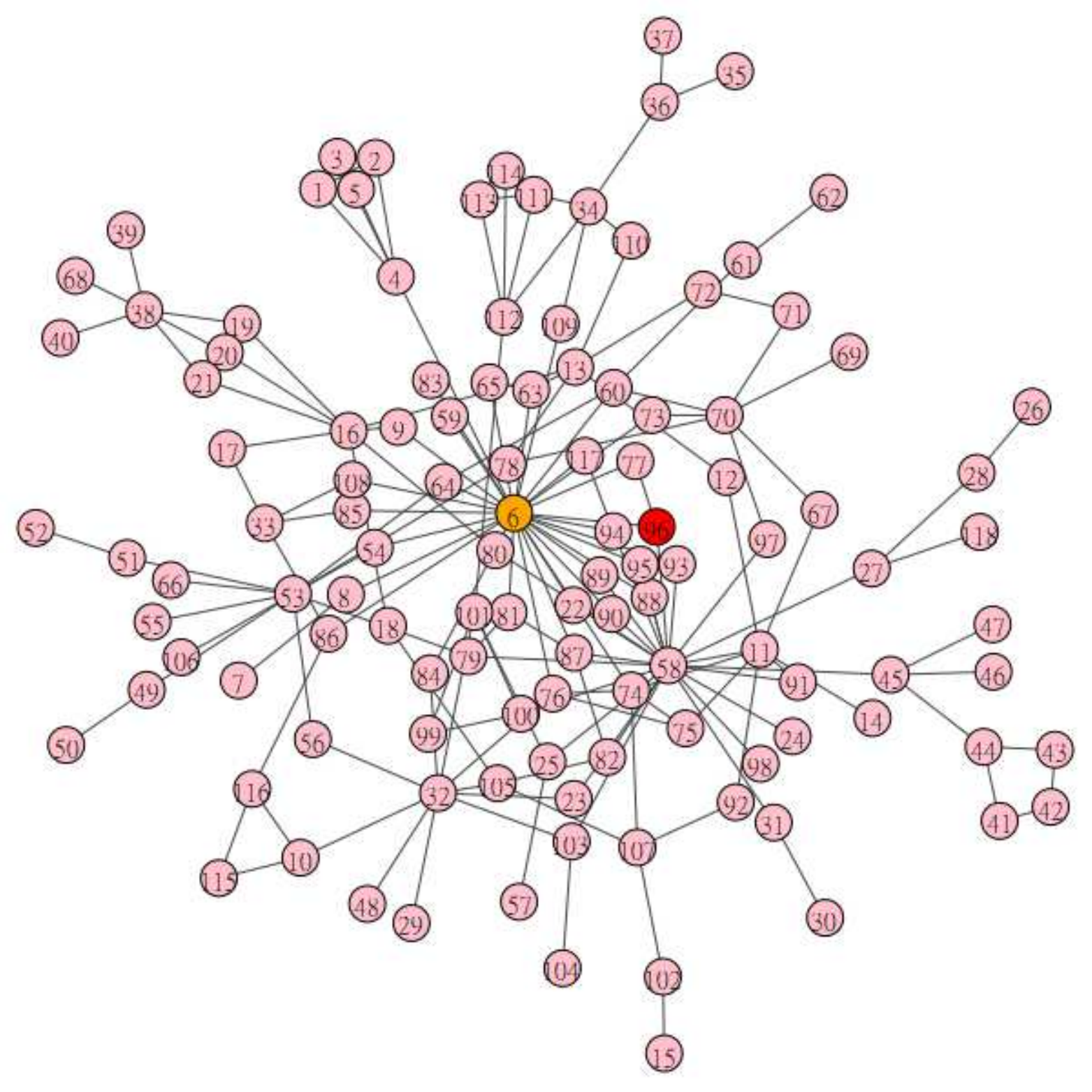}
    \caption{SARS-CoV2003 Contact Tracing Network in Taiwan. Each vertex represents either a confirmed case or a hospital. The orange and red vertices represent the source estimator determined by Algorithm $\mathsf{SCT}$ and  the $\mathsf{BFS}$ heuristic respectively. The orange vertex is the Taipei City Hospital Heping Branch where the major outbreak occurs and the red vertex is a confirmed case (not the first case) who had been to this hospital.}
    \label{fig:sars_tw}
\end{figure}

\subsubsection{COVID-19 Contact Tracing Network in Singapore, 2020 Mar-Apr}
The contact tracing network is an unconnected network due to the asymptomatic carriers, so we focus on the largest connected subgraph (cluster), including several worker dormitories and a construction site.
We apply Algorithm $\mathsf{SCT}$ on subgraphs of the contact tracing network in Singapore provided in \cite{sg_moh}. In our computation, each vertex represents either an infected person or a place where the person had visited. An edge between two vertices implies that either a person has visited a place or two places have at least one common visitor. Here we omit the edge of person-person contact since most of the contact history can only be traced back to a place, not a single person. Hence, we treat each person-vertex as a leaf vertex connecting to a place. 

The first massive outbreak occurred at the beginning of April and peaked on April 20. Hence, we consider the infected subgraph after April $1$. We define the source in the connected subgraph to be the first case in this connected subgraph. As far as we know, Case 655 attaching to Westlite Toh Guan (WTG) is the first case found in this subgraph on March 26. On April 3, the connected subgraph formed a 4-cycle, which is shown in Fig. \ref{fig:singapore}. We can apply Algorithm $\mathsf{SCT}$, which shows that WTG is the source estimator in this subgraph. After April 10, S11 Dorm (S11)  becomes the new epidemic center due to the link between STL and S11 Dorm (S11). Note that S11 contains the second earliest case in this cluster and becomes the largest cluster in Singapore, which has more than two thousand cases confirmed in the middle of May.

\begin{figure}
    \centering
    \includegraphics[scale=0.5]{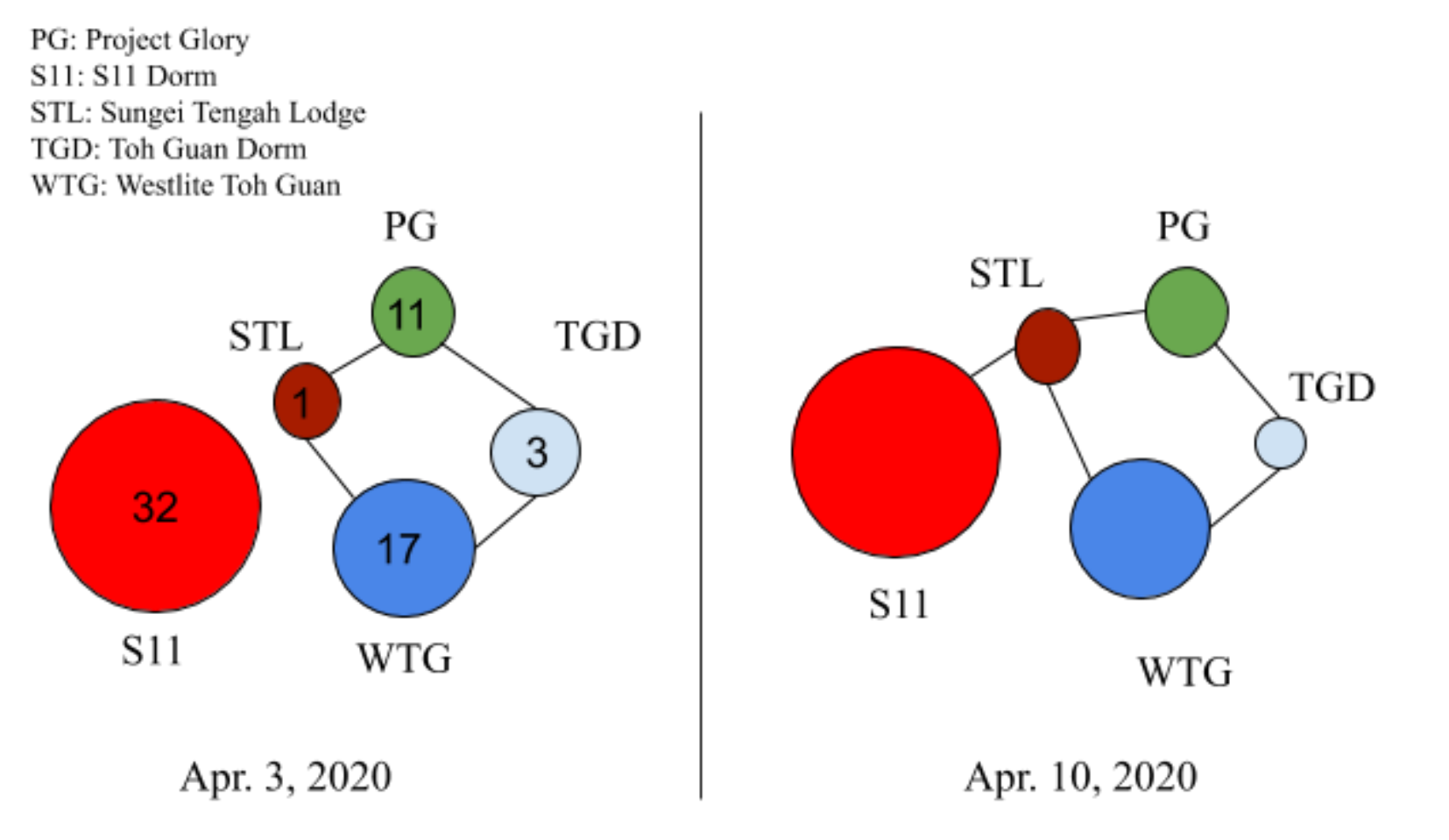}
    \caption{Each vertex is a cluster (place), and the number on each vertex is the total amount of infected people who have visited the place. On April 3, four places form a cycle, and all vertices are on the cycle. Algorithm $\mathsf{SCT}$ suggests that the source estimator is WTG where the first case in this subgraph comes from. On April 10, the epidemic center is S11 due to the link between S11 and STL.}
    \label{fig:singapore}
\end{figure}

\subsubsection{COVID-19 Contact Tracing Network in Taiwan, 2021 Feb and 2021 May}
We conduct experiments on two cluster infection in Taiwan recently. The first cluster infection originated from a northern Taiwan hospital on February. This network contains $18$ tractable domestic cases. The reason we select this networked data is that the contact network is public information provided by Central Epidemic Command Center in Taiwan \cite{tw_cdc} and the relation between cases in this network is clearly defined. Note that if we apply the $\mathsf{BFS}$ heuristic, then both case 838 and case 856 have the same possibility to be the source estimator. However, case 838 is the vertex with maximum statistical distance centrality, i.e., Algorithm $\mathsf{SCT}$ correctly identifies the first domestic case. 

The second cluster infection is the latest cluster infection found at the beginning of May 2021. As the source of this cluster is unknown, we let the source be the person with the earliest symptom onset. We collect the data before May 14,2021 from \cite{tw_cdc}, and apply the 2-mode network model \cite{ychen} to this cluster. Each vertex in this graph is either a workplace or a confirmed case. Both Algorithm $\mathsf{SCT}$ and the $\mathsf{BFS}$ heuristic identify the workplace, of the first case in this network. The contact tracing network is shown in Fig. \ref{fig:TW_covidMAY}, and the red vertex is the source estimator determined by both algorithms.

\begin{figure}
\begin{center}
\includegraphics[scale=0.3]{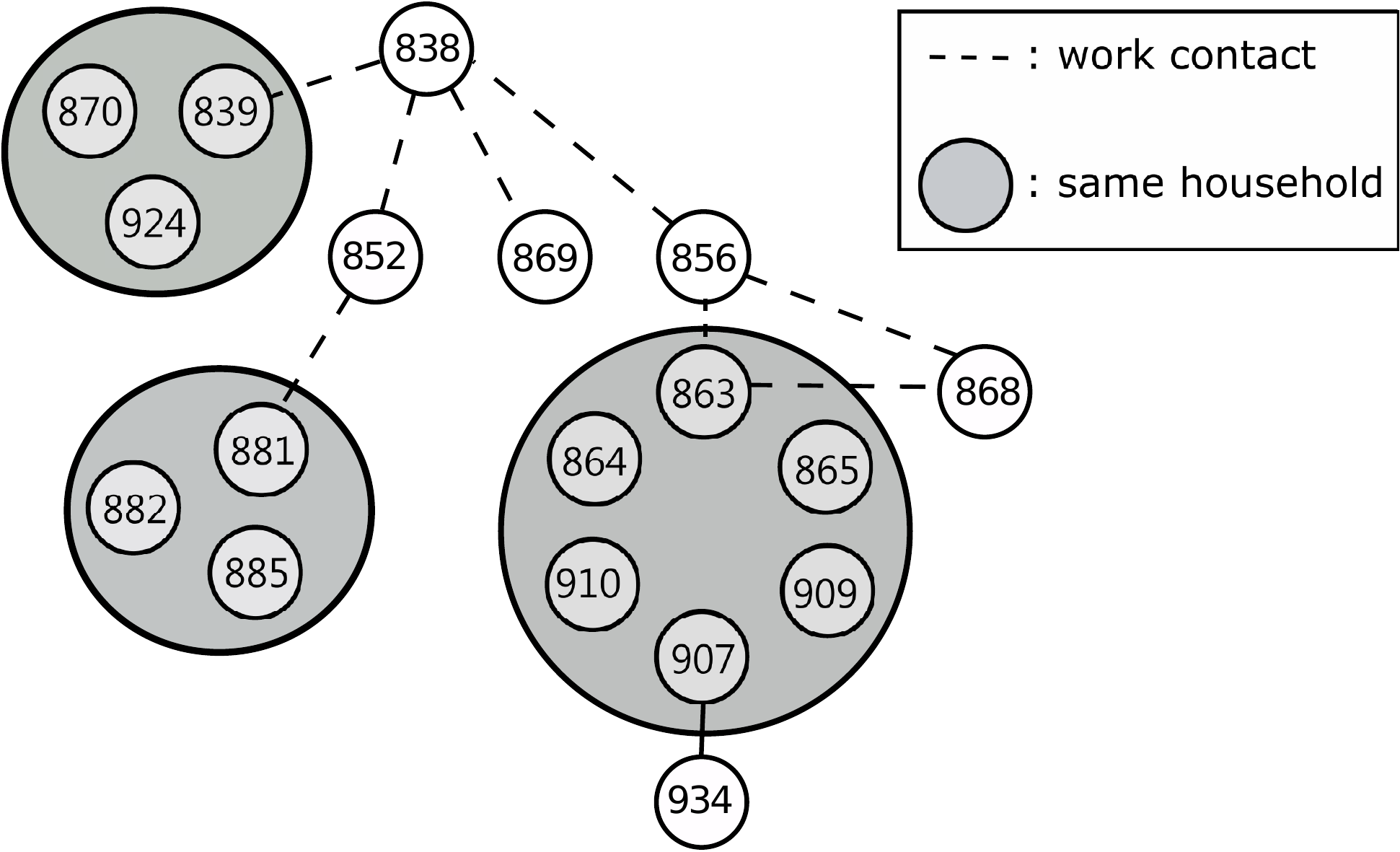}
\caption{The infected subgraph of the contact tracing network starting from a single case 838 in Taiwan. The number in each vertex is the case number. Case 838 has the maximum statistical centrality, and it is the first domestic case in this cluster.}
\label{fig:simulation_TW2021}
\end{center}
\end{figure}

\begin{figure}
\begin{center}
\includegraphics[scale=0.45]{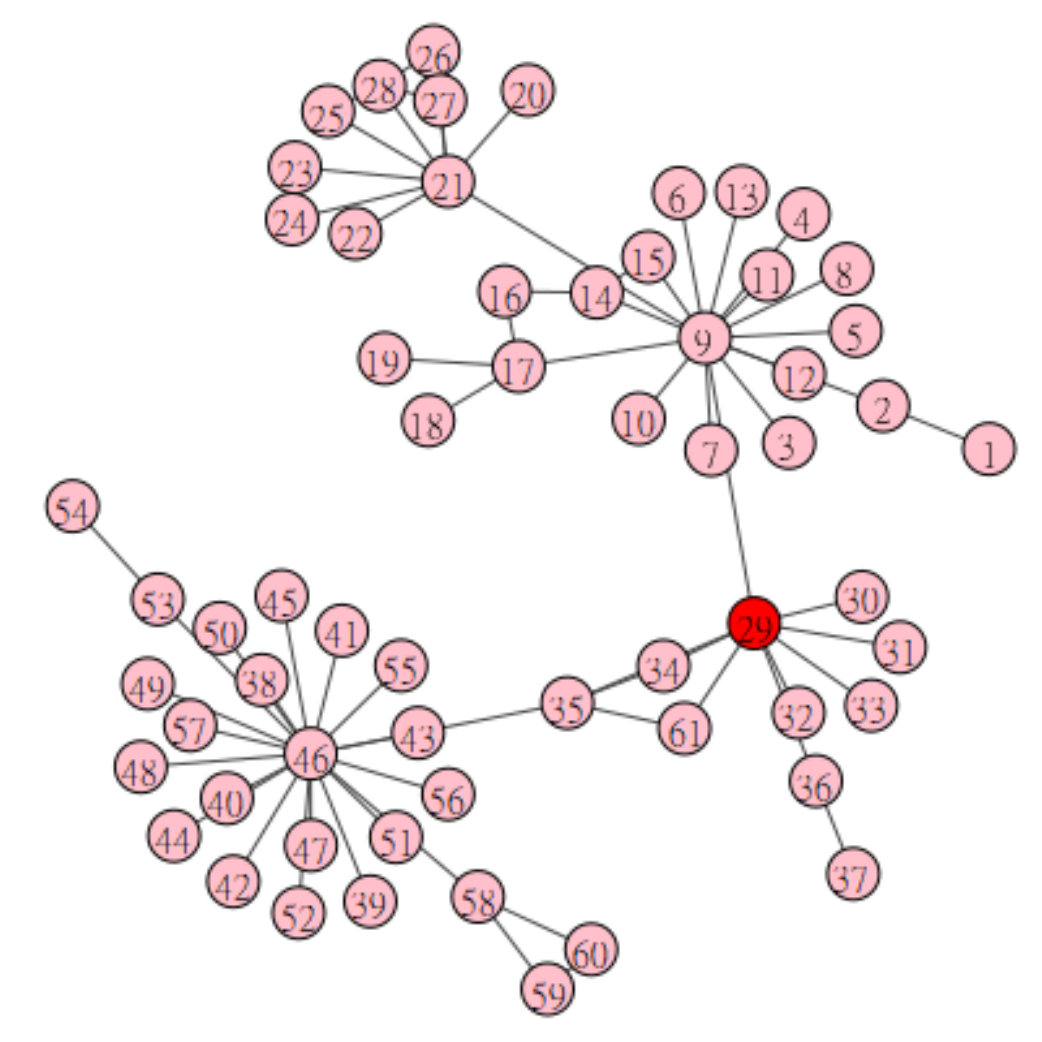}
\caption{The contact tracing network of COVID-19, at the beginning of May 2021, in Taiwan. The vertex in red is the work place of the person with the earliest symptom onset in this cluster.}
\label{fig:TW_covidMAY}
\end{center}
\end{figure}

\subsubsection{Simulation Results on Other Networks}
In addition to the rumor centrality approach, we have selected two other approaches, Dynamical Age \cite{spectral_rumor} and Jordan Centrality \cite{ZhuLei}, to compare to Algorithm $\mathsf{SCT}$. We use the average distance-error to measure the performance of each algorithm. The simulation results are shown in Table \ref{tab:all_simu_results}. In each simulation, we repeat the following process: generate $G_n$, find estimators and compute errors for five hundred times in each type of network. All datasets in Table \ref{tab:all_simu_results} are available at \cite{konect,snapnets} or can be generated by networkx \cite{SciPyProceedings_11}. Note that $\mathsf{SCT}$ works well in circulant graph and grid graph, since $\mathsf{SCT}$ is designed to solve the maximum likelihood estimation problem on finite regular graph with cycles. In addition to the selected algorithms, there are other approaches such as $\mathsf{PTVA}$ \cite{pinto} and gradient maximum likelihood algorithm ($\mathsf{GMLA}$) \cite{paluch} perform well on source detection problem. However, $\mathsf{PTVA}$ and $\mathsf{GMLA}$ are uncomparable with the methods listed in Table \ref{tab:all_simu_results}. Since $\mathsf{PTVA}$ and $\mathsf{GMLA}$ require additional information such as relative time stamps observed by observers. Moreover, to fairly compare all algorithms, the number of the observers and their location in the network also need to be carefully considered which is out of scope of this article.

\begin{table*}
\caption{Average error (in terms of number of hops) comparing Algorithm \ref{algo:master_algo} ($\mathsf{SCT}$) to Top-$k$ version of Rumor Centrality ($\mathsf{RC}$), Jordan Center ($\mathsf{JC}$) and Dynamical Age ($\mathsf{DA}$) in different networks}
\label{tab:all_simu_results}
\centering
\begin{tabular}{c|c|c|c|c|c|c}
 Network &$\vert G\vert$ &$\vert G_n \vert$& $\mathsf{SCT}$ & $\mathsf{BFS}$-$\mathsf{RC}$ & $\mathsf{JC}$ & $\mathsf{DA}$
\\
\hline
Circulant Graph(6000,6) & 6,000 & 400 & \textbf{1.67} & 2.75 & 2.87 & 1.90 
\\
\hline
$100\times 100$ Grid Graph &10,000 & 150 & \textbf{1.87} & 3.79 &  2.04 & 2.08
\\
\hline
Random $3$-regular Graph &5,000 & 200 & \textbf{1.26} & 1.42 & 1.57 & 1.32
\\
\hline
Barab\'asi-Albert(5000,3) & 5,000 & 300 & \textbf{2.74} & 4.25 & \textbf{2.75} & 2.96
\\
\hline
Canada Road Network & 1,965,206 & 100 & \textbf{3.15} & 3.41 & 3.19 & 4.03
\\
\hline
LastFM Asia Social Net. & 7,624 & 100 & \textbf{2.47} & 2.59 & 2.61 & 2.75
\\
\hline
Western U.S. Power Grid & 4,941 & 200 & \textbf{4.26} & 4.87 & 4.46 & 4.82
\end{tabular}
\end{table*} 
\color{black}
\section{Conclusion}
We present an optimal contact tracing algorithm for epidemic source detection that finds application in identification of superspreaders in an epidemic (e.g., the COVID-19 pandemic). Our algorithm design leverages a statistical distance centrality based on solving a graph-constrained maximum likelihood problem for contact tracing graphs with irregular vertices and cycles. As a high-dimensional statistical problem, epidemic source detection requires a careful understanding of the interaction between stochastic spreading processes and topological features like graph distances, cycles and finite boundaries, allowing us to resolve an open problem of finite degree-regular graphs with cycles in \cite{who}. Performance analysis demonstrated that our algorithm was near-optimal in correctly identifying superspreaders at some of the largest superspreading infection clusters using data from the SARS-CoV 2003 and COVID-19 pandemics in Singapore and Taiwan. It will be most interesting to study computational epidemiology through the lens of ``network centrality as statistical inference" to design robust contact tracing algorithms using mathematical and data-driven methods (e.g., machine learning techniques like deep learning, see \cite{bengio2020predicting,fei2021overview,siya2021CISS}) that can analyze past epidemic behaviors to fight against newly emerging epidemics.

\section{Acknowledgement}
We would like to acknowledge collaborations and interactions on this topic with H. Vincent Poor, Cheng-Shang Chang, Guanrong Chen, Siya Chen and Weng Kee Wong.




\begin{appendices}

\section{Proof of Theorem and Lemmas}
\subsection{Proof of Theorem \ref{thm:longterm}}
In this proof, we use the fact that 
\begin{center}
$P(v_c\vert G_n)>P_{v_c}^{v_e}(G_n,n)\cdot m_{v_c}^{v_e}(G_n,n)$,
\end{center}
and consider the ratio between the lower bound of $P(v_c\vert G_n)$ and $P(v_e\vert G_n)$ to simplify the proof. 
\begin{proof}
Let $G_n$ be the infected subgraph. Without loss of generality, we assume $n=2t+1$ and $d\geq 3$. For $v_e$, we have 
\begin{align*}
P(v_e\vert G_n)&=m_{v_e}^{v_e}(G_n,1)\cdot P_{v_e}^{v_e}(G_n,1) 
\\
&=1\cdot P_{v_e}^{v_e}(G_n,1)
\\
&=\prod\limits^{2t-1}_{i=0}\frac{1}{1+i(d-2)}=\prod\limits^{n-2}_{i=0}\frac{1}{1+i(d-2)}.
\end{align*}
For $v_c$, it is simpler to consider the last term of  (\ref{finiteP}) only, that is, $P_{v_c}^{v_e}(G_n,n)\cdot m_{v_c}^{v_e}(G_n,n)$. Note that $m_{v_c}^{v_e}(G_n,n)=\vert M(v_e,G'_{n-1})\vert$ where $G'_{n-1}=G_n\setminus\lbrace v_e\rbrace$. We have
\begin{align*}
&P_{v_c}^{v_e}(G_n,n)\cdot m_{v_c}^{v_e}(G_n,n)
\\
&= \left[ \prod\limits^{2t-1}_{i=0}\frac{1}{d+i(d-2)}\right] \cdot\frac{(2t)!}{2t(t-1)!t!} 
\\
&=\frac{(n-1)!}{(n-1)(\frac{n-3}{2})!(\frac{n-1}{2})!} \cdot\prod\limits^{n-2}_{i=0}\frac{1}{d+i(d-2)}.
\end{align*}  
Now, let us consider the ratio given by 
\begin{align*}
&\frac{P_{v_c}^{v_e}(G_n,n)\cdot m_{v_c}^{v_e}(G_n,n)}{P(v_e\vert G_n)}
\\
&=\frac{(n-1)!}{(n-1)(\frac{n-3}{2})!(\frac{n-1}{2})!}\cdot\frac{\prod\limits^{n-2}_{i=0}\frac{1}{d+i(d-2)}}{\prod\limits^{n-2}_{i=0}\frac{1}{1+i(d-2)}}
\\
&=c_1\cdot \frac{2}{(n-1)\cdot B(\frac{n-1}{2},\frac{n-1}{2})}\cdot\frac{\Gamma(n+\frac{1}{d-2}-1)}{\Gamma(n+\frac{d}{d-2}-1)}
\\
&\approx c_1\cdot \frac{2^{n-1}}{\sqrt{2\pi}}\cdot n^{-c_2},
\end{align*}
where $c_1$ and $c_2$ are some positive values with respect to $d$. The approximation is given by using Stirling's formula. The above result shows that the ratio becomes larger than 1 when $d$ is fixed, and $n$ is sufficiently large enough. This leads to
\begin{align*}
\frac{P(v_c\vert G_n)}{P(v_e\vert G_n)}>\frac{P_{v_c}^{v_e}(G_n,n)\cdot m_{v_c}^{v_e}(G_n,n)}{P(v_e\vert G_n)}>1,
\end{align*}
when $n$ is sufficiently large.
\end{proof}
\subsection{Proof of Lemma \ref{lem:key_lem}}

We first prove Lemma \ref{lem:d2_tree_ratio} and Lemma \ref{lem:key_lem}, which are tools to help us prove our main result.

\begin{lemma}
\label{lem:d2_tree_ratio}
Let $G$ be a tree and $v_a$, $v_b \in G$. If $d(v_a,v_b)=2$ and $M(v_a,G)>M(v_b, G)$, then $t^{v_b}_{v_a}>t^{v_a}_{v_b}$.
\end{lemma}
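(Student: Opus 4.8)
The plan is to prove the contrapositive-style identity relating the two rumor centralities through the branch sizes at the midpoint vertex. Since $d(v_a,v_b)=2$ in a tree, there is a unique vertex $w$ adjacent to both $v_a$ and $v_b$ on the path between them. First I would recall the standard multiplicative formula for rumor centrality on a tree rooted at a vertex: $|M(v,G)| = (|G|-1)! / \prod_{u \neq v} t^v_u$, where $t^v_u$ denotes the size of the subtree hanging off $u$ when the tree is rooted at $v$. This is the formula from \cite{who} and it is the key tool.

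The next step is to compare $|M(v_a,G)|$ and $|M(v_b,G)|$ by taking their ratio and cancelling. When we re-root the tree from $v_a$ to $v_b$, almost all subtree sizes $t^{v_a}_u$ and $t^{v_b}_u$ coincide: the only subtrees that change are those along the path $v_a - w - v_b$. Concretely, rooting at $v_a$, the subtree at $w$ in direction away from $v_a$ has size $n - t^{v_a}_{v_b} $ minus appropriate bookkeeping; I would carefully identify that $t^{v_a}_{v_b}$ (the branch at $v_b$ seen from $v_a$) and $t^{v_b}_{v_a}$ (the branch at $v_a$ seen from $v_b$) are the two quantities that survive cancellation, together with the subtree at $w$. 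Writing $n = |G|$, the subtree of $w$ seen from $v_a$ is $n - t^{v_b}_{v_a}$ and the subtree of $w$ seen from $v_b$ is $n - t^{v_a}_{v_b}$; all other factors in the product are identical. After cancellation the ratio reduces to
\begin{equation}
\frac{|M(v_a,G)|}{|M(v_b,G)|} = \frac{t^{v_b}_{v_a}\,(n - t^{v_b}_{v_a})}{t^{v_a}_{v_b}\,(n - t^{v_a}_{v_b})} \cdot (\text{terms that cancel to }1),
\end{equation}
so that $|M(v_a,G)| > |M(v_b,G)|$ is equivalent to $f(t^{v_b}_{v_a}) > f(t^{v_a}_{v_b})$ where $f(x) = x(n-x)$. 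Finally, since $v_a$ and $v_b$ lie two apart through $w$, the branches $t^{v_b}_{v_a}$ and $t^{v_a}_{v_b}$ are disjoint subtrees of $G$, hence $t^{v_b}_{v_a} + t^{v_a}_{v_b} \leq n$, in fact $\le n-1$ since $w$ belongs to neither; this forces both values into the range where $f$ is increasing (or at least lets us compare them), and monotonicity of $f$ on $[0,n/2]$ together with the sum constraint yields $t^{v_b}_{v_a} > t^{v_a}_{v_b}$.

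The main obstacle I anticipate is the bookkeeping in the cancellation step: one must be careful about exactly which subtree sizes change when re-rooting across a distance-two path, and about whether the intermediate vertex $w$'s branch contributes a factor that cancels or not. A clean way to handle this is to use the known local ratio formula (if $v_a,v_b$ are \emph{adjacent}, $|M(v_a,G)|/|M(v_b,G)| = t^{v_b}_{v_a} / t^{v_a}_{v_b}$, which appears implicitly in \cite{who}) and apply it twice along the path $v_a - w - v_b$, i.e., $|M(v_a,G)|/|M(v_b,G)| = (|M(v_a,G)|/|M(w,G)|)\cdot(|M(w,G)|/|M(v_b,G)|)$. Each factor is a ratio of adjacent-vertex branch sizes, and the product telescopes into the expression above. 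The only remaining subtlety is translating the hypothesis $d(v_a,v_b)=2$ into the disjointness bound $t^{v_b}_{v_a} + t^{v_a}_{v_b} < n$, which is immediate because removing $w$ separates these two branches. Once that bound is in hand, the inequality $f(t^{v_b}_{v_a}) > f(t^{v_a}_{v_b})$ combined with the sum constraint gives $t^{v_b}_{v_a} > t^{v_a}_{v_b}$ directly, completing the proof.
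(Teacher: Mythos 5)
Your overall route is the same as the paper's: pass through the midpoint vertex $w$ adjacent to both $v_a$ and $v_b$, compare $\vert M(v_a,G)\vert$ and $\vert M(v_b,G)\vert$ via the adjacent-vertex ratio identity applied twice along $v_a - w - v_b$, and finish by a monotonicity argument. However, the displayed ratio is wrong: the two factors $(n-t^{v_b}_{v_a})$ and $(n-t^{v_a}_{v_b})$ are transposed between numerator and denominator. Carrying out the telescoping correctly, one has $t^{w}_{v_a}=t^{v_b}_{v_a}$, $t^{v_a}_{w}=n-t^{v_b}_{v_a}$, $t^{v_b}_{w}=n-t^{v_a}_{v_b}$ and $t^{w}_{v_b}=t^{v_a}_{v_b}$, so that
\begin{equation*}
\frac{\vert M(v_a,G)\vert}{\vert M(v_b,G)\vert}=\frac{t^{w}_{v_a}}{t^{v_a}_{w}}\cdot\frac{t^{v_b}_{w}}{t^{w}_{v_b}}=\frac{t^{v_b}_{v_a}\,\bigl(n-t^{v_a}_{v_b}\bigr)}{t^{v_a}_{v_b}\,\bigl(n-t^{v_b}_{v_a}\bigr)},
\end{equation*}
and not $f\bigl(t^{v_b}_{v_a}\bigr)/f\bigl(t^{v_a}_{v_b}\bigr)$ with $f(x)=x(n-x)$. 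The discrepancy is already visible on a path with $n=6$ and $v_a,v_b$ at positions $2$ and $4$: the true ratio is $5/10=1/2$, while your expression evaluates to $8/9$. With the correct identity the hypothesis $\vert M(v_a,G)\vert>\vert M(v_b,G)\vert$ becomes $\frac{x}{n-x}>\frac{y}{n-y}$ for $x=t^{v_b}_{v_a}$ and $y=t^{v_a}_{v_b}$, and since $x\mapsto x/(n-x)$ is strictly increasing on $(0,n)$ this gives $x>y$ immediately -- which is exactly the paper's argument -- with no need for the downward parabola $x(n-x)$ or the disjointness bound $x+y<n$. Your final conclusion does happen to survive the algebra error, because $x(n-x)>y(n-y)$ is equivalent to $(x-y)(n-x-y)>0$ and hence to $x>y$ under the constraint $x+y<n$ that you correctly establish; but the identity as displayed is false, and the proof cannot stand until it is repaired.
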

\begin{proof}
Let $v_m$ denote the vertex on the path from $v_a$ to $v_b$. Then we can express $M(v_a,G)$ as 
\begin{align*}
    M(v_a,G)&= \frac{t^{v_m}_{v_a}}{t^{v_a}_{v_m}}\cdot M(v_m,G),
\end{align*}
and $M(v_b,G)$ can be expressed in the same form.
Since $M(v_a,G)>M(v_b, G)$, we have
\begin{align*}
     \frac{t^{v_m}_{v_a}}{t^{v_a}_{v_m}} > \frac{t^{v_m}_{v_b}}{t^{v_b}_{v_m}}.
\end{align*}
Note that $t^{v_m}_{v_a}=t^{v_b}_{v_a}$ and $t^{v_m}_{v_b}=t^{v_a}_{v_b}$, we can rewrite the above inequality as 
\begin{align*}
     \frac{t^{v_b}_{v_a}}{t^{v_a}_{v_m}} > \frac{t^{v_a}_{v_b}}{t^{v_b}_{v_m}}.
\end{align*}
Moreover, we can leverage the fact  $n=t^{v_a}_{v_m}+t^{v_b}_{v_a}=t^{v_b}_{v_m}+t^{v_a}_{v_b}$ to replace $t^{v_a}_{v_m}$ and $t^{v_b}_{v_m}$ in the above inequality.
Hence, we have 
\begin{align*}
     \frac{t^{v_b}_{v_a}}{n-t^{v_b}_{v_a}} > \frac{t^{v_a}_{v_b}}{n-t^{v_a}_{v_b}},
\end{align*}
which implies $t^{v_b}_{v_a}>t^{v_a}_{v_b}$.

\end{proof}

\begin{proof}

To prove Lemma \ref{lem:key_lem}, we first consider the second condition, i.e., the case when $d(v_a, v_{ir})= d(v_b, v_{ir})=1$. Note that in this case, $v_a$ is not on the shortest path from $v_b$ to $v_{ir}$ and vice versa.

We consider the ratio of $m_{v_a}^{v_{ir}}(G_n, i)$ to $m_{v_b}^{v_{ir}}(G_n, i)$ for all possible $i$. Note that
\begin{align*}
  m_{v}^{v_{ir}}(G_n, i)=  {n-i \choose t^{v}_{v_{ir}}-(i-1)} M(v,t^{v_{ir}}_{v_a})\cdot M(v_{ir},t^{v}_{v_{ir}}),
\end{align*}
where $v=v_a,v_b$.
Since $M(v_a,t^{v_{ir}}_{v_a})$, $M(v_{ir},t^{v_a}_{v_{ir}})$, $M(v_b,t^{v_{ir}}_{v_b})$ and $M(v_{ir},t^{v_b}_{v_{ir}})$ are fixed when $G_n$ is given,  we have 
\begin{align*}
    \frac{m_{v_a}^{v_{ir}}(G_n, i)}{m_{v_b}^{v_{ir}}(G_n, i)}\propto \frac{{n-i \choose t^{v_a}_{v_{ir}}-(i-1)}}{{n-i \choose t^{v_b}_{v_{ir}}-(i-1)}}=\frac{{n-i \choose n-t^{v_{ir}}_{v_a}-1}}{{n-i \choose n-t^{v_{ir}}_{v_b}-1}}.
\end{align*}
Since $G_n$ is a tree and $d(v_a, v_{ir})= d(v_b, v_{ir})=1$, we have $d(v_a, v_b)=2$. By Lemma \ref{lem:d2_tree_ratio}, we have $t^{v_b}_{v_a}>t^{v_a}_{v_b}$, which implies $t^{v_{ir}}_{v_a}>t^{v_{ir}}_{v_b}$. Hence, the ratio $\frac{m_{v_a}^{v_{ir}}(G_n, i)}{m_{v_b}^{v_{ir}}(G_n, i)} $ is an increasing sequence with respect to $i$.  

Since $m_{v_a}^{v_{ir}}(G_n, 2)=m_{v_b}^{v_{ir}}(G_n, 2)$ and $\frac{m_{v_a}^{v_{ir}}(G_n, i)}{m_{v_b}^{v_{ir}}(G_n, i)} $ is  increasing with respect to $i$, we can conclude that $m_{v_a}^{v_{ir}}(G_n, i)\geq m_{v_b}^{v_{ir}}(G_n, i) $ which leads to
\begin{center}
    $\sum\limits_{i=d(v_a, v_{ir})+1}^{k}m_{v_a}^{v_{ir}}(G_n, i)\geq \sum\limits_{i=d(v_a, v_{ir})+1}^{k}m_{v_b}^{v_{ir}}(G_n, i)$.
\end{center}

On the other hand, we consider the case when $v_a$ is on the path from $v_b$ to $v_{ir}$, i.e., $\{v_a,v_b\}\in E(G)$ and $d(v_a, v_{ir})=d(v_b,v_{ir})+1$.

For brevity,  we denote the distance from $v_a$ to $v_{ir}$ as $D$, and relabel the vertices on the path from $v_p$ to $v_e$ as $u_0$ to $u_{D+1}$ which are shown in Fig. \ref{fig:relabel_path}.

\begin{figure}
\begin{center}
\label{fig:relabel_path}
\includegraphics[scale=0.4]{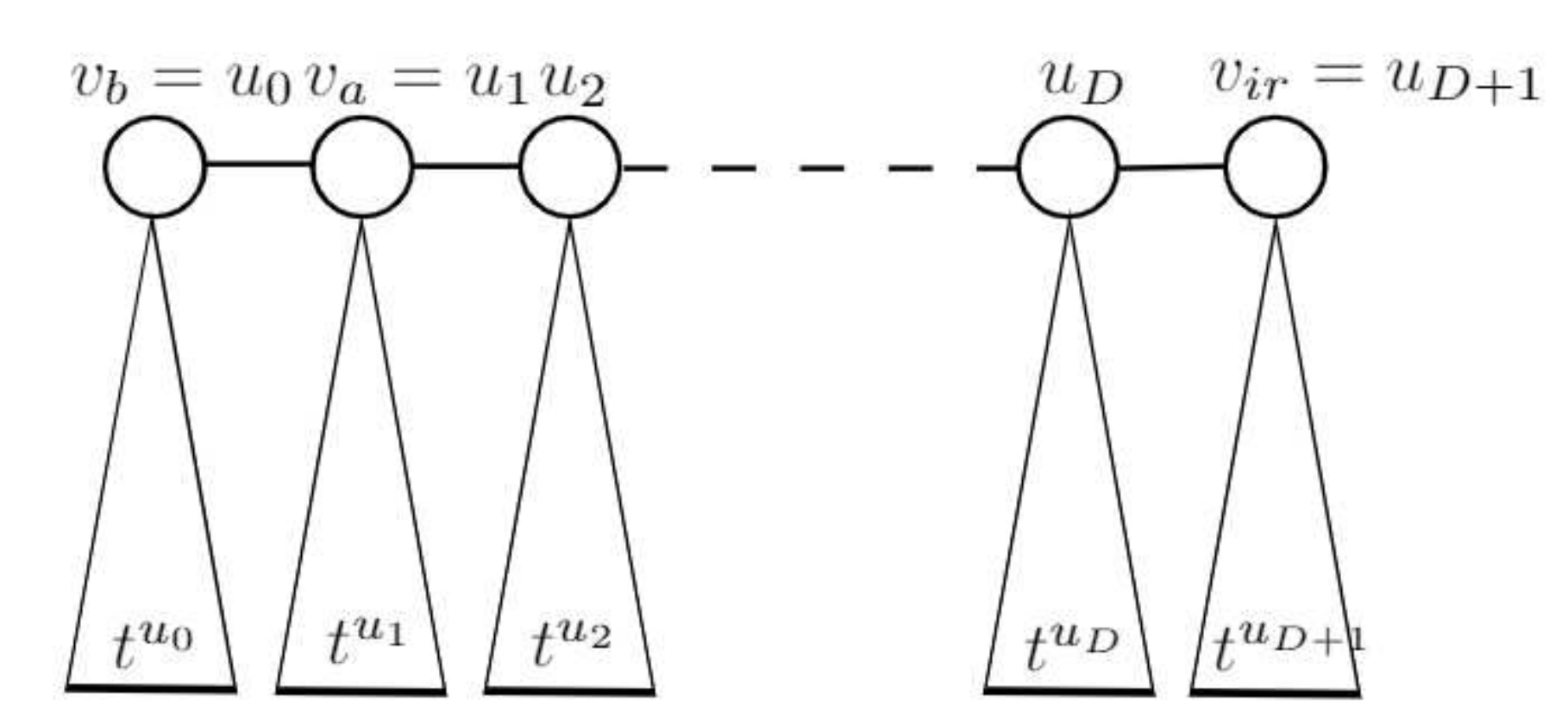}
\end{center}
\end{figure}

Again, we consider the ratio of $m_{u_1}^{u_{D+1}}(G_n, i)$ to $m_{u_0}^{u_{D+1}}(G_n, i)$ for all possible $i$. 
We can express  $m_{u_1}^{u_{D+1}}(G_n, i)$ and $m_{u_0}^{u_{D+1}}(G_n, i)$ as follows:
\begin{align*}
 & m_{u_1}^{u_{D+1}}(G_n, i)=  \\
  & \sum\limits_{1<i_1<\ldots < i_{D-1} <i} \Big[ \prod_{l=1}^{D-1} {n-i_l -t^{u_1}_{u_{l+1}} \choose t^{u_l}-1} \Big] {n-i \choose t^{u_1}_{u_{D+1}}-1}  \textbf{M}_a,
\end{align*}
where 
\begin{align*}
 \textbf{M}_a=M(u_1,t^{u_{D+1}}_{u_1})\cdot \prod_{l=2}^{D+1} M(u_l,t^{u_l}).
\end{align*}
We cam express $m_{u_0}^{u_{D+1}}(G_n,i)$ in the same way. Since $\textbf{M}_a$ is fixed when $G_n$ is given, we have

\begin{align*}
    \frac{m_{u_1}^{u_{D+1}}(G_n, i)}{m_{u_0}^{u_{D+1}}(G_n, i)} \propto \frac{ \sum\limits_{1<i_1<\ldots < i_{D-1} <i} \Big[ \prod_{l=1}^{D-1} {n-i_l -t^{u_1}_{u_{l+1}} \choose t^{u_l}-1} \Big]}{\sum\limits_{1<i_1<\ldots < i_{D} <i} \Big[ \prod_{l=1}^{D} {n-i_l -t^{u_1}_{u_{l+1}} \choose t^{u_l}-1} \Big]},
\end{align*}
which is a decreasing sequence with respect to $i$.

Note that when $i=D+1$, we have $m_{u_1}^{u_{D+1}}(G_n, D+1)>m_{u_0}^{u_{D+1}}(G_n, D+1)$.

Now, we can prove the main statement of Lemma \ref{lem:key_lem}. To contrary, suppose there is an integer $k$ such that 

\begin{center}
    $\sum\limits_{i=D+1}^{k}m_{v_a}^{v_{ir}}(G_n, i)< \sum\limits_{i=D+1}^{k}m_{v_b}^{v_{ir}}(G_n, i)$.
\end{center}
Since the ratio of $m_{u_1}^{u_{D+1}}(G_n, i)$ to $m_{u_0}^{u_{D+1}}(G_n, i)$ is a decreasing sequence, we have for all $k'>k$,  
\begin{center}
$\sum\limits_{i=D+1}^{k'}m_{v_a}^{v_{ir}}(G_n, i)< \sum\limits_{i=D+1}^{k'}m_{v_b}^{v_{ir}}(G_n, i)$.
\end{center}
This leads to $M(u_1,G_n)<M(u_0,G_n)$, which is a contradiction to the assumption. Hence, we can conclude that $\forall k$,

\begin{center}
    $\sum\limits_{i=D+1}^{k}m_{v_a}^{v_{ir}}(G_n, i)\geq \sum\limits_{i=D+1}^{k}m_{v_b}^{v_{ir}}(G_n, i)$.
\end{center} 
\end{proof}
\subsection{Proof of Theorem \ref{thm:keyfinite}}

\begin{proof}
Let $D=d(v_c,v_{ir})$ and we relabel vertices on the path from $v_c$ to $v_{ir}$ as $u_1, u_2, \ldots, u_{D+1}$ . We can partition $V(G_n)$ as follows
\begin{center}
  $V(G_n)=V(t^{v_{ir}}_{v_c})+V(t^{v_c}_{v_{ir}})+\sum\limits_{i=2}^{D} V(t^i)$.
  \end{center}
For any vertex $v$ in  $V(t^{v_{ir}}_{v_c})$ other than $v_{ir}$. If $d(v,v_{ir})>1$, then there is a neighbor of $v$, say $u$, such that $d(v,v_{ir})>d(u,v_{ir})$. Since $M(u,G_n)>M(v,G_n)$, by the condition 1 in Lemma \ref{lem:key_lem}, we can conclude that   
\begin{center}
$\sum\limits_{i=d(u, v_{ir})+1}^{k}m_{u}^{v_{ir}}(G_n, i)\geq \sum\limits_{i=d(u, v_{ir})+1}^{k}m_{v}^{v_{ir}}(G_n, i)$.
\end{center}
Since the probability $P(G_n\vert v)$ can be computed as follows
\begin{equation*}
P(G_n\vert v)=\sum\limits^{n-t^v_{v_{ir}}+1}_{i=D+1} m^{v_{ir}}_{v}(G_n,i)\cdot P^{v_{ir}}_{v}(G_n,i),
\end{equation*} 
and $P^{v_{ir}}_{v}(G_n,i)$ is a decreasing with respect to $i$.
We can conclude that $P(G_n\vert u)>P(G_n\vert v)$. This procedure can be continued until we reach to $v_{ir}$.
By the same argument, we can prove the case when $v$ is in other parts with the remaining uncomparable vertices being on the path from $v_c$ to $v_{ir}$.
\end{proof}

\subsection{Proof of Theorem \ref{thm:rough_locate_cycle}}
\begin{proof}
Let $v\in G_n$ and $v$ is not a cycle vertex and each connected component of $G_n\backslash \lbrace v\rbrace$ is of size less or equal to $n/2$. Consider any given spanning tree $T_j$ of $G_n$, assume that $v$ is not the rumor center of $T_j$. Then, there is a vertex $u$ such that $t^v_u>t^u_v$ on $T_j$. Since $t^v_u+t^u_v=n$, we can conclude that the size of the connected component of $G_n\backslash \lbrace v\rbrace$ containing $u$ is greater than $n/2$, which contradicts to the assumption. Hence, we have the fact that $v$ is the rumor center on each spanning tree $T_j$, for $j=1,2,\ldots , h$, and $v$ is also the rumor center of $G_n$ from (\ref{eq:new_M}).
\end{proof}

\end{appendices}


\vspace{-4em}
\begin{IEEEbiography}[{\includegraphics[width=1in,height=1.25in,clip,keepaspectratio]{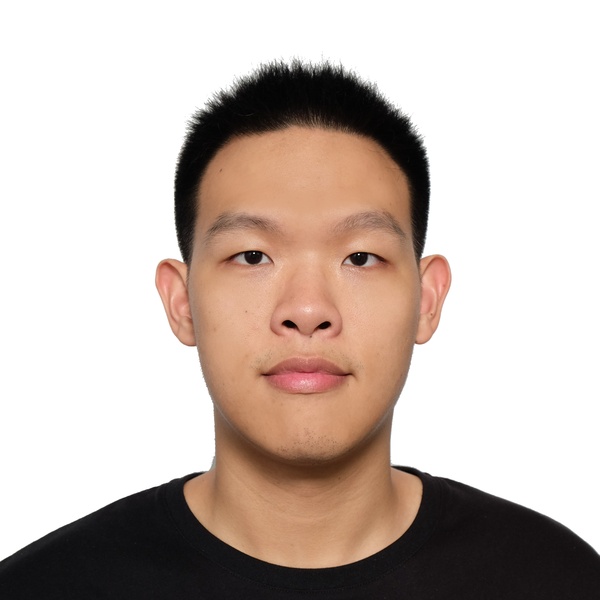}}]
{Pei-Duo Yu}  received the B.Sc. and M.Sc. degree in Applied Mathematics in 2011 and 2014, respectively, from the National Chiao Tung University, Taiwan. He received his  Ph.D. degree at the Department of Computer Science, City University of Hong Kong, Hong Kong. Currently, he is an Assistant Professor at the Department of Applied Mathematics in Chung Yuan Christian University, Taiwan. His research interests include combinatorics counting, graph algorithms, optimization theory and its applications.  
\end{IEEEbiography}

\vspace{-4em}
\begin{IEEEbiography}[{\includegraphics[width=1in,height=1.25in,clip,keepaspectratio]{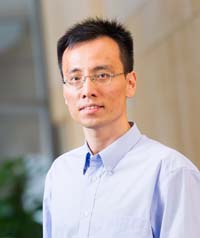}}]
{Chee Wei Tan}(M '08-SM '12)  received the M.A. and Ph.D. degrees in electrical and computer engineering from Princeton University, Princeton, NJ, USA, in
2006 and 2008, respectively. His research interests are in wireless networks, optimization and distributed machine learning. He is an IEEE Communications Society Distinguished Lecturer and has been an Associate Editor of IEEE Transactions on Communications and IEEE/ACM Transactions on Networking.
\end{IEEEbiography}

\vspace{-4em}
\begin{IEEEbiography}[{\includegraphics[width=1in,height=1.25in,clip,keepaspectratio]{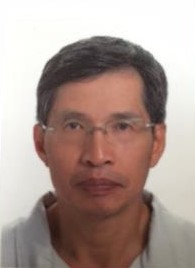}}]
{Hung-Lin Fu} (M '13) received the B. S. degree in Mathematics from National Taiwan Normal University in 1973 and Ph. D. degree in Mathematics (major in combinatorics) from Auburn University, Auburn, AL, in 1980. Before retirement, he was a Professor with the Department of Applied Mathematics, National Chiao Tung University, for 32 years. Currently, he is a fellow of the Institute of Combinatorics and Its Applications, and an Emeritus Professor of National Yang Ming Chiao Tung University, Hsinchu, Taiwan (since 2020). His interests are graph theory, combinatorial designs, and their applications.
\end{IEEEbiography}

\end{document}